\pdfoutput=1
\documentclass[11pt,a4paper]{article}
\usepackage[a4paper,margin=2.7cm]{geometry}
\usepackage{lmodern}
\usepackage{authblk}
\usepackage[T1]{fontenc}
\usepackage[utf8]{inputenc}
\usepackage{amsmath,amssymb,amsthm}
\usepackage{graphicx}
\usepackage{booktabs}
\usepackage{algorithm}
\usepackage[noend]{algpseudocode}
\usepackage{array}
\usepackage{url}
\usepackage{placeins}
\usepackage{xcolor}
\usepackage{tikz}
\usetikzlibrary{arrows.meta,positioning,calc,fit}

\newtheorem{theorem}{Theorem}[section]
\newtheorem{proposition}[theorem]{Proposition}
\newtheorem{lemma}[theorem]{Lemma}
\theoremstyle{definition}
\newtheorem{definition}[theorem]{Definition}
\newtheorem{remark}[theorem]{Remark}
\newcommand{\Sig}{\Sigma}
\newcommand{\Sigfour}{\Sig_{4}}
\newcommand{\Sigeight}{\Sig_{8}}
\newcommand{\Sigtwelve}{\Sig_{12}}
\newcommand{\lang}{\mathcal{L}}
\newcommand{\anti}{\mathcal{A}}
\newcommand{\parikh}[1]{\boldsymbol{c}(#1)}


\usepackage[hidelinks]{hyperref}
\begin{document}

\title{Counting and Sampling Manufacturable Stacking Sequences}
\author[1]{Alessio Langiu\thanks{Corresponding author: \href{mailto:alessio.langiu@cnr.it}{alessio.langiu@cnr.it}}}
\author[2]{Felice Rubino}
\author[3]{Giacomo Canale}
\affil[1]{National Research Council of Italy - Institute of Marine Sciences (CNR-ISMAR), Palermo, Italy}
\affil[2]{Department of Industrial Engineering, University of Salerno, Fisciano, Italy}
\affil[3]{College of Science and Engineering, University of Derby, Derby, UK}
\date{}
\maketitle

\begin{abstract}
Designing the stacking sequence of a composite laminate is a discrete optimisation problem: each ply
takes one of a few orientations, each evaluation may cost a finite-element solve, and the
design must obey manufacturing rules and design guidelines, which optimisers enforce by penalty,
repair or rejection. We describe the admissible set exactly. The two manufacturing rules forbid a finite set of factors, so the sequences they
admit form a regular language, recognised by an automaton that tracks the last orientation and the
length of its run; balance and the
ten per cent rule constrain the Parikh vector and are carried by a dynamic program over automaton
state and counters. One table, polynomial in $N$, counts the
admissible sequences without producing them, draws them uniformly, and tells
whether a partial sequence can still be completed. We give pseudocode, exact counts for three
orientation sets, closed-form ceilings on the
composition, and four of the five single-rule entries a published table left uncomputed. At $48$ plies, $99.8\%$ of the nodes
of a layer-wise search tree checking the rules locally have no admissible completion on eight
orientations; the completion test removes them. A one-point crossover that cuts two admissible parents only where they share an automaton state and their child closes the counters returns admissible children by construction.
\end{abstract}

\noindent\textbf{Keywords:} stacking sequence; manufacturing constraints; regular languages; Parikh
vector; exact counting; uniform random generation; branch and bound.

\noindent\textbf{MSC 2020:} 68Q45, 05A15, 68W20, 90C27, 74E30.

\section{Introduction}
\label{sec:intro}

A composite laminate is a stack of $N$ plies, each a sheet of parallel fibres laid at one of a
finite set of orientations, and its \emph{stacking sequence} lists those orientations from one
surface to the other. We borrow the vocabulary of formal languages: the allowed orientations form an
\emph{alphabet} $\Sig$, a stacking sequence is a \emph{word} over it, one letter per ply, and a set of
stacking sequences is a \emph{language}. The orientations determine the in-plane and bending stiffness of the laminate, hence its buckling
load, natural frequencies and deflection under load, and the design problem is to maximise one of
these responses, or to minimise $N$ subject to targets on them. It is a discrete
optimisation problem over $\Sig^{N}$, and each
evaluation of the objective may cost a finite-element solve. It has been searched with genetic
algorithms since the early 1990s \cite{callahan1992,leriche1993}, with permutation encodings
\cite{liu2000permutati,jing2015stacking}, particle swarms \cite{chang2009stacking}, scatter search
\cite{ramamohanrao2005scatter} and ant colonies \cite{aymerich2008optimizat}, and with branch and
bound: layer by layer from the outer surface inwards \cite{kim2005gust}, in a modified form that
starts from ``ideal'' layers and keeps one branch per level \cite[Sec.~2]{canale2018layup}, or in
the space of lamination parameters, where the admissible laminates form a fractal set
\cite{todoroki2004fractal}, non-symmetric laminates included \cite{matsuzaki2007unsym}. It has been
extended to blending across panels \cite{soremekun2002stacking,yang2016optimal}, to lamination
parameters as a continuous relaxation \cite{liu2019twolevel,albazzan2019efficient}, to
variable-stiffness laminates \cite{abdalla2007variable,rashed2022variable}, to strength as well as
buckling \cite{aymerich2008optimizat,an2014laminate}, to multiple objectives
\cite{abouhamze2007multiobje}, and to structures other than flat panels: cylindrical panels
\cite{abouhamze2007multiobje}, laminated grid plates \cite{ehsani2016stacking}, a wind-turbine blade
\cite{albanesi2018simultane} and an aero-engine fan-blade-like structure \cite{canale2018realistic}.

Not every word is a laminate that can be built or certified. Two manufacturing rules are local:
consecutive plies must not differ by more than $45^{\circ}$ of orientation, and no more than three
consecutive plies may share one. Designers add guidelines that are not local, namely mirror
symmetry, balance between $+\theta$ and $-\theta$, and a minimum share of ten per cent for each
principal direction; symmetry and balance are dropped where a directional stiffness or a
coupling is wanted, as in aeroelastic tailoring \cite{shirk1986aeroelastic}. Each of these methods spends part of its effort on candidates outside the set these rules
define: a genetic algorithm penalises the rules it does not build into its encoding
\cite{leriche1993} or repairs them \cite{todoroki1998repair}, a sampler builds candidates and
rejects the non-compliant, a branch and bound checks each partial sequence against the rules and
learns only at the leaves that a branch had no admissible completion. The rules enter as a
filter: as design rules to respect
\cite{irisarri2014,kim2007developme} or to relax \cite{verchery2011rules}, as constraints on the
fibre paths of tow-steered plies \cite[Sec.~5]{brooks2018manufactu}, or as a coupled ply-order and
ply-drop problem \cite{albanesi2018simultane}. We describe the admissible set exactly and derive from the description algorithms that produce
admissible sequences, and only those.

The set treated here is that of a single panel: blending between adjacent panels, ply drops
\cite{irisarri2014} and damage tolerance are not part of it (Section~\ref{sec:conclusions}).

Throughout the paper the reference size is $N=48$ plies. It is the ply count of the classical
benchmark of Le Riche and Haftka \cite{leriche1993}, and it lies far beyond the ply counts at which we enumerate exhaustively
(Section~\ref{sec:verification}). Nothing in the construction depends on it: the counts are given
from $24$ to $54$ plies, and every other quantity at $48$.

Zein et al.\ \cite{zein2014}, whose
constraint-programming method searches a backtracking enumeration tree pruned as soon as a prefix
breaks a rule, with rules of the same kind (no $90^{\circ}$ jump, at most four identical
consecutive plies, prescribed ply counts per orientation), write of it that ``it does not give
compute the number of admissible stacking sequences'' [\emph{sic}] \cite[p.~57]{zein2014}, and
frame the cost in terms of enumerating $4^{16}$ or $4^{32}$ candidates \cite[Sec.~4.2]{zein2014}. Irisarri
\cite{irisarri2009} tabulates the number of realisable stacking sequences rule by rule at four,
eight and sixteen plies, and leaves several cells marked as not computed. Fedon et al.\
\cite{fedon2021beam} note that the number of admissible laminates grows exponentially with the ply count, which makes
complete enumeration impossible; that is true of enumeration and does not follow for
counting. A granted
family assigned to Boeing builds a library of admissible sublaminates by growing an integer tree
and pruning the branches that violate the rules \cite{us10380319}, the rules being stated there as
ranges of ply counts per orientation, a cap on consecutive plies of one orientation, and a list of
sequences that are disallowed, a finite antidictionary in all but the name; a companion patent
bounds the ply counts of a laminate per orientation to decide feasibility before any optimisation
is run \cite{us10322792}. That tree is the unfolding of the automaton used here, and it is stored:
one node for every admissible prefix. Its specification puts the size of the \emph{unconstrained} space at $4^{10}$ sequences for a
ten-layer sublaminate over four orientations and does not report how many of them survive the
rules. Recent work maps the same design space onto a quantum state, or onto a tensor network, and carries
the manufacturing rules as penalty terms in an objective \cite{wulff2024}.

\begin{figure*}[!t]
  \centering
  \includegraphics[width=0.85\textwidth]{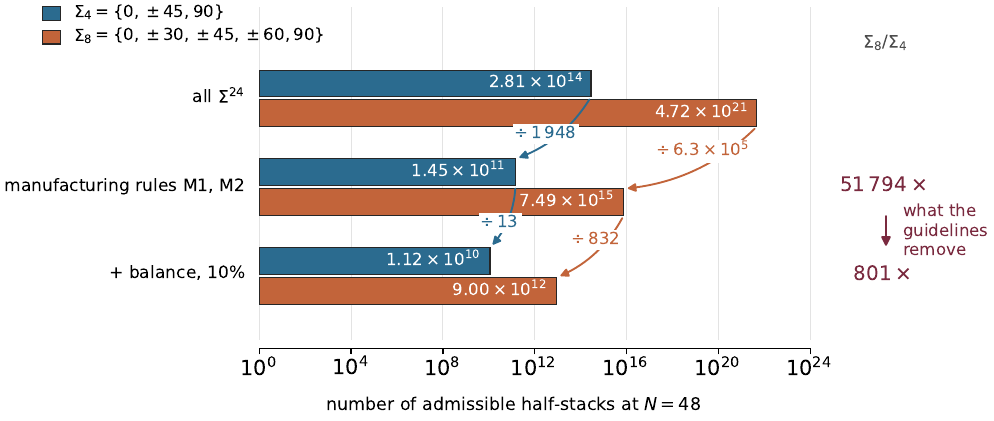}
  \caption{The admissible set inside all stacking sequences, at $N=48$. Each bar counts the
    half-stacks of $24$ plies that survive one restriction more than the bar above it. In the
    right-hand column, the extended set is larger by a factor of $51\,794$ on the
    manufacturing language and of $801$ once the guidelines are imposed. Exact counts, as in Table~\ref{tab:counts}.}
  \label{fig:nesting}
\end{figure*}

\subsection{Our results}

We show that the two manufacturing rules forbid a finite set of factors, so that the sequences they
admit form a regular language, recognised by an automaton whose state is the last orientation together with
the length of the run it terminates (Section~\ref{sec:language}), while balance and the $10\%$ rule constrain the Parikh vector and are carried exactly by a dynamic program over automaton
state and counters. Symmetry reduces a laminate to its half-stack, and the mirror leaves only a condition on the last
run. The
construction therefore covers symmetric and non-symmetric, balanced and unbalanced laminates, over
any finite set of orientations. We treat sets of four, eight and twelve orientations.

\emph{Algorithms.} One dynamic program, built once per alphabet and ply count, supports three
operations (Section~\ref{sec:counting}, Algorithms~\ref{alg:dfa}--\ref{alg:sample}): it returns
the number of admissible sequences without producing any; it draws an admissible sequence uniformly
at random in $O(N)$, with no rejection; and it decides whether a partial sequence can still be
completed, by a lookup after a walk of its length. A fourth operation, recombination of two
admissible parents, returns an admissible child on every call, a fresh exact draw when no admissible cut
exists (Algorithm~\ref{alg:cut}).

\emph{Counts.} We give exact counts for the three alphabets, for symmetric laminates at ply counts from $24$ to $54$, and for non-symmetric ones, balanced and unbalanced,
verified against exhaustive enumeration (Section~\ref{sec:verification}). Under the rules of
\cite{irisarri2009} our methods reproduce five of the six rule columns of its table at four
and eight plies, and fill four of the five single-rule entries it leaves uncomputed at sixteen
(Section~\ref{sec:completing}). At $48$ plies
the guidelines divide the symmetric language by $13$ and by $832$ on the four- and eight-orientation
sets, so the size advantage of the larger set falls from $51\,794$ to $801$
(Figure~\ref{fig:nesting}). We also give closed-form ceilings on the composition.

\emph{Tree search.} At $48$ plies, $90.8\%$ of the nodes of a branch and bound that assigns plies from the surface
inwards and checks the rules at each node have, on four orientations,
no admissible completion, $99.8\%$ on eight and more than $99.99\%$ on twelve
(Section~\ref{sec:tree}). The completion test of the program removes them, and since two prefixes
in the same cell have the same completions, the tree of completable prefixes collapses onto the
cells: $1.3\times 10^{6}$ of them against $2.9\times 10^{13}$ prefixes on eight orientations.

\emph{Recombination.} Two identical parents are in the same automaton state at every cut, so a population that has converged recombines only into copies of its members, and the diversity of a search inside the language has to come from outside the operator (Section~\ref{sec:search}). The exact sampler cannot fail and reports an empty language as empty (Section~\ref{sec:sampling}).

The techniques are classical and we claim none of them. Counting the words that avoid a finite set
of factors is done by a transfer matrix \cite[Ex.~V.11]{flajolet2009analytic} or by string
correlations \cite{guibas1981}; the forbidden-factor description we use, and the automaton that
recognises the words avoiding a finite antidictionary, are those of \cite[Sec.~2]{crochemore1998};
and transfer matrices have been used to enumerate and generate constrained configurations in other
fields \cite{kloczkowski1998transfer}. The sampler of Section~\ref{sec:sampling} belongs to the line
that generates combinatorial objects uniformly by counting them first
\cite{wilf1977unified,flajolet1994,hickey1983uniform,jerrum1986generatio,oudinet2013dichotomic}, to
which Boltzmann samplers are the standard alternative \cite{duchon2004boltzmann}, and which meets
constraint programming where a sequence is constrained by an automaton
\cite{pesant2004regular,demassey2006costregul} or by a Markov model \cite{pachet2010markov};
conditioning it on prescribed letter counts is classical within that method as well
\cite{denise2000,denise2010}, and also done by holonomic methods \cite{bertoni2003}.  What is new is that this
set of industrial constraints is of that kind, a regular language conditioned on thresholds and
signed differences of its letter counts rather than on a prescribed composition, together with a
table that carries only what those conditions need.

The paper is organised as follows. In Section~\ref{sec:language} we define the language and the two
kinds of constraint, and in Section~\ref{sec:counting} we give the algorithms, the counts and the
bounds, and verify the counts. In Section~\ref{sec:tree} we compare the construction with branch
and bound, in Section~\ref{sec:sampling} we sample uniformly, and in Section~\ref{sec:search} we
give the recombination operator. Section~\ref{sec:conclusions} discusses what is due to whom and the
limitations, and Section~\ref{sec:final} concludes.

\section{The language of admissible stacking sequences}
\label{sec:language}

\subsection{Words, forbidden factors, and the manufacturing rules}

A stacking sequence of $N$ plies is a word of length $N$ over a finite alphabet $\Sig$ of ply
orientations. We consider the conventional set, its extension by $\pm 30$ and $\pm 60$, and a
finer one,
\begin{align*}
  \Sigfour&=\{0,\pm 45,90\}, \qquad \Sigeight=\{0,\pm 30,\pm 45,\pm 60,90\},\\
  \Sigtwelve&=\{0,\pm 15,\pm 30,\pm 45,\pm 60,\pm 75,90\},
\end{align*}
angles being measured in degrees; $\Sigeight$ is also called the \emph{extended} alphabet.

Recall that a word $u$ is a \emph{factor} of $w$ if $w=xuy$ for some words $x,y$, and that a
set of words $\anti$ is an \emph{antidictionary} for the language
$\lang(\anti)=\{w\in\Sig^{*} : \text{no factor of } w \text{ belongs to } \anti\}$, the words
that \emph{avoid} $\anti$ \cite[Sec.~2]{crochemore1998}, \cite{crochemore2000data}. When $\anti$ is finite, $\lang(\anti)$ is regular
\cite[Sec.~2]{crochemore1998}.

The disorientation between two orientations is
\begin{equation}
  \label{eq:adiff}
  \delta(a,b)=\min\bigl(|a-b| \bmod 180,\; 180-(|a-b| \bmod 180)\bigr),
\end{equation}
and the two manufacturing rules read as follows (Figure~\ref{fig:plies}).

\begin{description}
  \item[\textnormal{(M1) Disorientation.}] No two consecutive plies differ by more than
    $45^{\circ}$, that is, $\delta(w_i,w_{i+1})\le 45$ for every $i$.
  \item[\textnormal{(M2) Contiguity.}] No more than three consecutive plies share the same
    orientation.
\end{description}

\begin{figure*}[!t]
  \centering
  \includegraphics[width=0.85\textwidth]{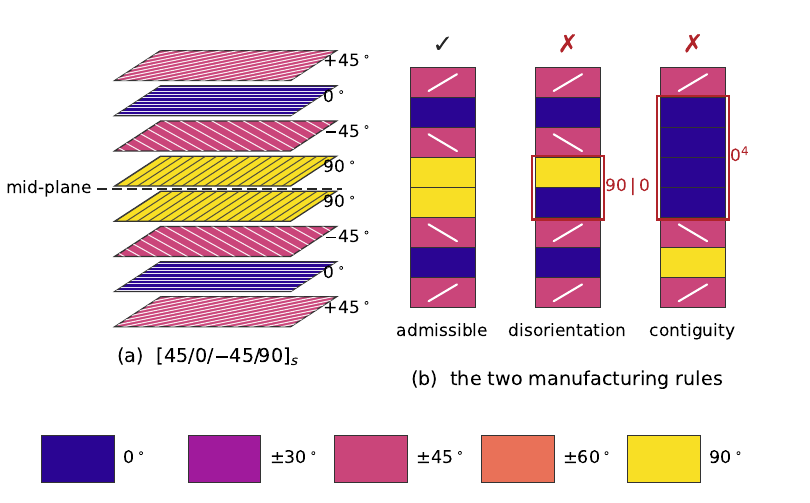}
  \caption{A stacking sequence and the two manufacturing rules. (\textbf{a}) An eight-ply
    symmetric laminate in exploded view, each ply carrying its fibre direction; the mid-plane is
    the axis of the mirror. (\textbf{b}) The laminate of (a) as a column of plies, admissible, and
    two sequences that are not, one against (M1) and one against (M2). Colour encodes $|\theta|$;
    the sign is carried by the fibres and by the lean of the stroke.}
  \label{fig:plies}
\end{figure*}

\begin{proposition}
  \label{prop:anti}
  Rules \textnormal{(M1)} and \textnormal{(M2)} forbid exactly the finite set
  \begin{equation}
    \label{eq:anti}
    \anti=\{ab \;:\; a,b\in\Sig,\ \delta(a,b)>45\}\ \cup\ \{aaaa \;:\; a\in\Sig\},
  \end{equation}
  and the set of stacking sequences satisfying both rules is $\lang(\anti)$. In particular
  $|\anti|=8$ for $\Sigfour$ (four forbidden pairs, namely $0\,90$, $90\,0$, $45\,{-45}$ and
  $-45\,45$, and four forbidden quadruples), $|\anti|=36$ for $\Sigeight$ (twenty-eight
  pairs and eight quadruples) and $|\anti|=72$ for $\Sigtwelve$ (sixty pairs and twelve
  quadruples).
\end{proposition}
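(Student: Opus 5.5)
The plan is to prove two inclusions between the set of words satisfying (M1) and (M2) and $\lang(\anti)$, and then to count $\anti$ alphabet by alphabet.

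For the equivalence, take $w=w_1\cdots w_N$. First, suppose $w$ violates (M1). Then some $i$ has $\delta(w_i,w_{i+1})>45$, and the factor $w_iw_{i+1}$ lies in the pair part of~\eqref{eq:anti}. Next, suppose $w$ violates (M2). Then there is a run of at least four equal letters $a$, and it contains the factor $aaaa\in\anti$. Conversely, suppose $w$ has a factor in $\anti$. If that factor is a pair $ab$ with $\delta(a,b)>45$, its two letters are consecutive plies and (M1) fails. If it is $aaaa$, four consecutive plies share orientation $a$ and (M2) fails. Hence $w$ satisfies both rules if and only if it avoids $\anti$, so the admissible set is $\lang(\anti)$. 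It is regular by \cite[Sec.~2]{crochemore1998}, since $\anti$ is finite. The word ``exactly'' also needs a remark: each rule is, by its wording, a condition on windows of two or of four consecutive plies. So $\anti$ is precisely the set of forbidden windows, and no element is redundant in the sense of being excluded by the other rule. A quadruple $aaaa$ has $\delta(a,a)=0$, so it contains no forbidden pair. A forbidden pair has two distinct letters, so it is not a quadruple.

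For the cardinalities, the quadruples contribute $|\Sig|$ words: $4$, $8$ and $12$. To count the ordered pairs, I will reduce each orientation modulo $180$ and place it on a circle of circumference $180$. By~\eqref{eq:adiff}, $\delta$ is then the circular distance, so the number of forbidden pairs is $\sum_{a}\#\{b:\delta(a,b)>45\}$.

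On $\Sigfour$ the points $0,45,90,135$ are spaced by $45$. Each point has exactly one partner at distance $90$, which gives four pairs: $0\,90$, $90\,0$, $45\,{-45}$ and $-45\,45$.

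On $\Sigtwelve$ the points are equally spaced by $15$. From each point the six others at distances $15$, $30$ and $45$ are allowed. That leaves $12-1-6=5$ forbidden partners per point, hence $60$ pairs.

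On $\Sigeight$ the spacing is irregular, and I expect this to be the only step needing care. I will tabulate the distances from each of $0,30,45,60,90,120,135,150$. Counting partners at distance greater than $45$ gives $3,4,3,4,3,4,3,4$ for these points in order. The reflection $\theta\mapsto-\theta$ and the shift by $90$ preserve $\Sigeight$ and $\delta$. Checking $0$, $30$ and $45$ therefore suffices, and the sum is $28$.

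Adding the quadruples gives $|\anti|=8$, $36$ and $72$ for $\Sigfour$, $\Sigeight$ and $\Sigtwelve$.
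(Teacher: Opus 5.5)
Your proposal is correct and takes the paper's route: both rules are conditions on windows of two and of four consecutive plies, so a word breaks them exactly when it contains one of the listed windows as a factor, and the cardinalities come from inspecting $\delta$ on $\Sig\times\Sig$. Your circular-distance picture, with the symmetry reduction on $\Sigeight$ giving partner counts $3,4,3,4,3,4,3,4$ and a total of $28$, makes that inspection explicit, and all the counts check out.
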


\begin{proof}
  Both rules are stated on windows of fixed length, two consecutive plies for (M1) and four
  for (M2), so a sequence violates them if and only if it contains one of the corresponding
  windows as a factor. Enumerating the violating windows gives \eqref{eq:anti}. The
  cardinalities follow by inspection of $\delta$ on $\Sig\times\Sig$.
\end{proof}

\begin{theorem}
  \label{thm:regular}
  The set of stacking sequences satisfying \textnormal{(M1)} and \textnormal{(M2)} is a regular
  language, recognised by a finite automaton with $3|\Sig|$ states besides the initial one, in which the state reached is
  a function of the prefix read.
\end{theorem}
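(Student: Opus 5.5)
By Proposition~\ref{prop:anti}, the set in question is $\lang(\anti)$ with $\anti$ finite, so regularity already follows from \cite[Sec.~2]{crochemore1998}. The content of the theorem is therefore the explicit automaton and its state count, and I will construct it directly.

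\emph{Construction.} Take the state set $Q=\{q_0\}\cup\{(a,k): a\in\Sig,\ k\in\{1,2,3\}\}$, which has $3|\Sig|$ states besides the initial state $q_0$. Every state is accepting. The transitions are:
\begin{itemize}
\item $q_0\xrightarrow{a}(a,1)$ for every letter $a$;
\item $(a,k)\xrightarrow{a}(a,k+1)$ when $k<3$;
\item $(a,k)\xrightarrow{b}(b,1)$ when $b\neq a$ and $\delta(a,b)\le 45$.
\end{itemize}
All other transitions are undefined. One could add a rejecting sink to make the automaton complete; it is not counted among the states in the statement.

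\emph{Invariant.} The automaton is deterministic, so the state reached after reading a word $w$ is a function of $w$ (when defined). I will prove by induction on $|w|\ge 1$ that $\hat\delta(q_0,w)$ is defined if and only if $w\in\lang(\anti)$. In that case the state reached is $(a,k)$, where $a$ is the last letter of $w$ and $k$ is the length of the maximal final run of $a$'s. This $k$ satisfies $k\le 3$ because $aaaa\in\anti$.

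\emph{Inductive step.} Write the word as $wb$ with $w\in\lang(\anti)$ in state $(a,k)$. The only factors of $wb$ that are not already factors of $w$ are its suffixes. Since every word in $\anti$ has length 2 or 4, $wb$ avoids $\anti$ exactly when two conditions hold: $ab\notin\anti$, and the suffix of length 4 of $wb$ is not $bbbb$. There are two cases.
\begin{itemize}
\item If $b\neq a$, the pair condition is $\delta(a,b)\le 45$, and no suffix $bbbb$ can occur. The new run of $b$ has length 1, which matches the transition to $(b,1)$.
\item If $b=a$, then $\delta(a,a)=0$, so the pair condition holds. The suffix of length 4 is $aaaa$ exactly when $k=3$. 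So the extension is admissible exactly when $k<3$, and the new run length is $k+1$.
\end{itemize}
The base case $|w|=1$ is the first transition rule. Whether the empty word is accepted is a convention: $q_0$ is accepting if the empty word is counted.

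\emph{Main obstacle.} The step I expect to be most delicate is justifying that only suffixes of length 2 and 4 need to be checked. This needs the observation that a new forbidden factor must end at the last letter. It also needs the last letter and the run length to be sufficient information to decide both conditions; a forbidden quadruple $aaaa$ is determined by the run length exactly because $k$ is the \emph{maximal} run. The remaining work is bookkeeping.
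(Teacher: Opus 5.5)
Your proof is correct and follows the paper's: it reduces to $\lang(\anti)$ via Proposition~\ref{prop:anti}, uses the same states $(a,r)$ plus $q_0$, and the same three transition rules. The paper compresses your induction into the single remark that a transition is undefined exactly when it would complete a factor of $\anti$. Your invariant on the last letter and the maximal final run spells that remark out; you should also state the trivial case where the prefix already contains a forbidden factor, which holds because $\lang(\anti)$ is factor-closed.
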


\begin{proof}
  By Proposition~\ref{prop:anti} it suffices to recognise $\lang(\anti)$. Take
  \begin{equation}
    \label{eq:states}
    Q=\{(a,r) \;:\; a\in\Sig,\ 1\le r\le 3\},
  \end{equation}
  where $a$ is the last orientation read and $r$ the length of the run it terminates, with
  transitions
  \begin{equation}
    \label{eq:delta}
    \delta_{Q}\bigl((a,r),b\bigr)=
    \begin{cases}
      (a,r+1) & \text{if } b=a \text{ and } r<3,\\
      (b,1)   & \text{if } b\neq a \text{ and } \delta(a,b)\le 45,\\
      \text{undefined} & \text{otherwise}.
    \end{cases}
  \end{equation}
  Every state is accepting, and the states reachable by a single ply are those with $r=1$; prefixing a state $q_{0}$ for the empty word, with
  $\delta_{Q}(q_{0},a)=(a,1)$, gives a deterministic automaton in the usual sense. A transition
  is undefined exactly when it would complete a factor of $\anti$, by \eqref{eq:anti}: the
  second case of \eqref{eq:delta} rules out the forbidden pairs and the first the quadruples.
  Hence the automaton accepts $\lang(\anti)$.
\end{proof}

Figure~\ref{fig:dfa} shows the automaton for $\Sigfour$, where $|Q|=12$ and there are $32$
transitions; on $\Sigeight$ it has $24$ states and $100$ transitions, on $\Sigtwelve$ $36$ and
$240$. It is, up to its sink state, the automaton that \cite[Sec.~2]{crochemore1998} build from a
finite antidictionary, whose states are the proper prefixes of forbidden words; here those prefixes
are the empty word and the runs of one, two and three identical plies. In Figure~\ref{fig:dfa} the missing cross transitions are
exactly the forbidden pairs, $0\leftrightarrow 90$ and $45\leftrightarrow-45$, so that what
remains is a cycle on the four orientations.

\begin{figure*}[!t]
  \centering
  \resizebox{0.62\textwidth}{!}{%
\begin{tikzpicture}[
    >=Latex, node distance=0pt,
    st/.style={circle, draw, minimum size=9.5mm, inner sep=0pt, font=\small},
    acc/.style={st, double, double distance=1pt},
    grp/.style={rounded corners=3pt, draw=black!35, dashed, inner sep=4.5pt},
    cross/.style={->, thick, black!70, rounded corners=4pt},
    run/.style={->, thick},
    lbl/.style={font=\footnotesize, inner sep=1.5pt, fill=white},
  ]

  \node[acc] (s01) at (0.00,-0.00) {$0^\circ$$_{1}$};
  \node[st] (s02) at (2.70,-0.00) {$0^\circ$$_{2}$};
  \node[st] (s03) at (5.40,-0.00) {$0^\circ$$_{3}$};
  \node[acc] (s11) at (0.00,-2.15) {$45^\circ$$_{1}$};
  \node[st] (s12) at (2.70,-2.15) {$45^\circ$$_{2}$};
  \node[st] (s13) at (5.40,-2.15) {$45^\circ$$_{3}$};
  \node[acc] (s21) at (0.00,-4.30) {$90^\circ$$_{1}$};
  \node[st] (s22) at (2.70,-4.30) {$90^\circ$$_{2}$};
  \node[st] (s23) at (5.40,-4.30) {$90^\circ$$_{3}$};
  \node[acc] (s31) at (0.00,-6.45) {$-45^\circ$$_{1}$};
  \node[st] (s32) at (2.70,-6.45) {$-45^\circ$$_{2}$};
  \node[st] (s33) at (5.40,-6.45) {$-45^\circ$$_{3}$};

  \draw[run] (s01) -- node[lbl, above] {$0^\circ$} (s02);
  \draw[run] (s02) -- node[lbl, above] {$0^\circ$} (s03);
  \draw[run] (s11) -- node[lbl, above] {$45^\circ$} (s12);
  \draw[run] (s12) -- node[lbl, above] {$45^\circ$} (s13);
  \draw[run] (s21) -- node[lbl, above] {$90^\circ$} (s22);
  \draw[run] (s22) -- node[lbl, above] {$90^\circ$} (s23);
  \draw[run] (s31) -- node[lbl, above] {$-45^\circ$} (s32);
  \draw[run] (s32) -- node[lbl, above] {$-45^\circ$} (s33);

  \node[grp, fit=(s01)(s03)] (box0) {};
  \node[grp, fit=(s11)(s13)] (box1) {};
  \node[grp, fit=(s21)(s23)] (box2) {};
  \node[grp, fit=(s31)(s33)] (box3) {};

  \draw[cross] (box0.west) -- (-1.15,-0.00) -- node[lbl, pos=0.5] {$45^\circ$} (-1.15,-2.15) -- (s11.west);
  \draw[cross] (box0.east) -- (8.35,-0.00) -- node[lbl, pos=0.5] {$-45^\circ$} (8.35,-7.52) -- (0,-7.52) -- (s31.south);
  \draw[cross] (box1.east) -- (6.55,-2.15) -- node[lbl, pos=0.5] {$0^\circ$} (6.55,-1.07) -- (0,-1.07) -- (s01.south);
  \draw[cross] (box1.west) -- (-1.75,-2.15) -- node[lbl, pos=0.5] {$90^\circ$} (-1.75,-4.30) -- (s21.west);
  \draw[cross] (box2.east) -- (7.15,-4.30) -- node[lbl, pos=0.5] {$45^\circ$} (7.15,-3.22) -- (0,-3.22) -- (s11.south);
  \draw[cross] (box2.west) -- (-2.35,-4.30) -- node[lbl, pos=0.5] {$-45^\circ$} (-2.35,-6.45) -- (s31.west);
  \draw[cross] (box3.west) -- (-2.95,-6.45) -- node[lbl, pos=0.5] {$0^\circ$} (-2.95,1.07) -- (0,1.07) -- (s01.north);
  \draw[cross] (box3.east) -- (7.75,-6.45) -- node[lbl, pos=0.5] {$90^\circ$} (7.75,-5.38) -- (0,-5.38) -- (s21.south);

\end{tikzpicture}
}
  \caption{The automaton of the manufacturing rules on $\Sigfour$. A state $\theta_{r}$ is the pair
    (last orientation, run length). Horizontal arrows repeat the last ply and advance the run,
    which stops at three; returning arrows change orientation and reset the run to one; the three states of an orientation
    are drawn as one dashed box, and each returning arrow stands for three transitions. Double circles mark the states
    with run one; the initial state $q_{0}$ is not drawn.}
  \label{fig:dfa}
\end{figure*}

\subsection{The design guidelines as Parikh constraints}

The remaining guidelines do not forbid factors: they constrain how many plies of each
orientation the laminate contains. Let $\parikh{w}$ denote the \emph{Parikh vector} of $w$, whose
entry $c_{a}(w)$ counts the occurrences of $a$ in $w$. For a laminate of $N$ plies they read
\begin{align}
  \label{eq:balance}
  c_{+\theta}(w)&=c_{-\theta}(w) && \text{for every } \theta \text{ with } \pm\theta\in\Sig,
  &&\text{(balance)}\\
  \label{eq:tenpct}
  c_{a}(w)&\ge 0.1\,N && \text{for every } a\in\{0,45,-45,90\},
  &&\text{(10\% rule)}
\end{align}
and, where it is required, mirror symmetry about the mid-plane. The constructions below take each guideline as an option.

\begin{proposition}
  \label{prop:notregular}
  The words of all lengths over $\Sig\supseteq\Sigfour$ that satisfy \textnormal{(M1)},
  \textnormal{(M2)} and \eqref{eq:balance} do not form a regular language, and neither do those
  that satisfy \eqref{eq:tenpct} as well.
\end{proposition}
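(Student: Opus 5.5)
The argument is the standard one: intersect with a regular language, then apply the pumping lemma, or equivalently a Myhill--Nerode argument, to what remains. Let $B$ be the language of the statement: words of any length over $\Sig$ satisfying (M1), (M2) and \eqref{eq:balance}. Suppose $B$ were regular. Regular languages are closed under intersection, so $B\cap R$ would be regular for every regular $R$. I would take
\[
R=(45\;0)^{*}(-45\;0)^{*},
\]
which is regular. Every word of $R$ avoids $\anti$: the only adjacencies are $45\,0$, $0\,45$, $0\,{-45}$ and $-45\,0$, all with $\delta=45$, and no letter ever appears twice in a row. By Proposition~\ref{prop:anti} every word of $R$ therefore satisfies (M1) and (M2). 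In a word $(45\,0)^{i}(-45\,0)^{j}$ we have $c_{45}=i$ and $c_{-45}=j$, and all other signed pairs $\pm\theta$ in $\Sig$ have count zero. So balance holds exactly when $i=j$, and
\[
B\cap R=\{(45\;0)^{i}(-45\;0)^{i} : i\ge 0\}.
\]

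This set is not regular. Under the homomorphism $45\,0\mapsto x$, $-45\,0\mapsto y$ it is essentially $\{x^{i}y^{i}\}$. More directly, the prefixes $(45\,0)^{i}$ are pairwise Myhill--Nerode inequivalent: the suffix $(-45\,0)^{i}$ completes $(45\,0)^{i}$ into the language but not $(45\,0)^{i'}$ for $i'\neq i$. This contradiction proves the first claim.

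For the second claim, the only difficulty is that \eqref{eq:tenpct} needs every orientation of $\{0,45,-45,90\}$ to make up at least a tenth of the word. My witnesses above contain no $90$, so I would change $R$. I would insert a fixed-proportion block that carries $90$ while keeping the balance trade-off visible, for example
\[
R'=\bigl(45\;0\;45\;90\;{-45}\;90\;{-45}\;0\bigr)^{*}(45\;0)^{*}(-45\;0)^{*}.
\]
First I would check that the periodic block is admissible. Its adjacencies are $45\,0$, $0\,45$, $45\,90$, $90\,{-45}$, $-45\,90$, $-45\,0$, and the wrap-around $0\,45$, all at $45^{\circ}$, with no repeated letters, so Proposition~\ref{prop:anti} applies. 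The junctions into the other factors must also be checked: $0\to45$ and $0\to-45$ are fine. The block is itself balanced and contains each of $0,45,-45,90$ twice. Taking $k$ blocks followed by $(45\,0)^{i}(-45\,0)^{j}$, balance forces $i=j$. The $10\%$ rule is then a linear inequality, roughly $2k\ge 0.1(8k+4i)$, that is $i\le 3k$, which leaves infinitely many admissible $(k,i)$.

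To conclude, I would intersect further with the regular language fixing $k=3i$ blocks... but that set is not regular either. Instead I would argue by Myhill--Nerode directly on the language $T$ of the second claim. Take the prefixes $u_{i}=(\text{block})^{i}(45\,0)^{i}$ and the suffix $(-45\,0)^{i}$. The word $u_{i}(-45\,0)^{i}$ lies in $T$, since it has length $12i$ and each of the four orientations occurs at least $2i\ge1.2i$ times. For $i'\neq i$, the word $u_{i'}(-45\,0)^{i}$ is unbalanced. So the prefixes $u_{i}$ are pairwise inequivalent and $T$ is not regular.

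This verification of the thresholds is the step that needs care, because the rule is stated with respect to the total length $N$. Finally, since $\Sig\supseteq\Sigfour$, the extra letters of larger alphabets never occur in these witnesses, so the argument holds verbatim for $\Sigeight$ and $\Sigtwelve$.
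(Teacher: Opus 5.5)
Your proof is correct, and it follows the paper's overall strategy: intersect with a regular set of words that obey (M1)--(M2), then reduce to an $x^{n}y^{n}$-type language. The difference is the choice of filter.

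The paper takes $R=(45\;90)^{*}(-45\;0)^{*}$. Pairing $45$ with $90$ and $-45$ with $0$ means that once balance forces $n=k$, each of the four principal directions is exactly a quarter of the plies. The $10\%$ rule then holds for free, and the single intersection $\{(45\;90)^{n}(-45\;0)^{n}\}$ settles both claims at once.

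Your $R=(45\;0)^{*}(-45\;0)^{*}$ never uses $90$. That is why the $10\%$ part costs you the padding block and a separate Myhill--Nerode argument. That argument checks out:
\begin{itemize}
\item $u_{i}(-45\;0)^{i}$ has length $12i$ and counts $4i,3i,3i,2i$ for $0,45,-45,90$, all at least $1.2i$;
\item $u_{i'}(-45\;0)^{i}$ has $c_{45}-c_{-45}=i'-i\neq 0$, so it is unbalanced.
\end{itemize}
The abandoned sentence about intersecting with the language that fixes $k=3i$ blocks can simply be dropped.

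Your route gives you a reusable device: prepend a balanced block containing every thresholded direction to any witness family that lacks one. Its distinguishing-suffix argument also does not need the intersection to have a clean closed form. The paper's route gains brevity, by choosing the witness so that the threshold constraint is never active.
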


\begin{proof}
  Intersect with the regular language $R=(45\;90)^{*}(-45\;0)^{*}$. In a word of $R$ consecutive
  plies differ by $45^{\circ}$ and no orientation is repeated, so (M1) and (M2) hold, and
  $c_{45}=c_{90}=n$, $c_{-45}=c_{0}=k$ for some $n,k\ge 0$. Balance holds exactly when $n=k$, and
  then each of the four directions is a quarter of the plies, so \eqref{eq:tenpct} holds too. The
  intersection is $\{(45\;90)^{n}(-45\;0)^{n} : n\ge 0\}$ in both cases, which is not regular by
  the pumping lemma; since regular languages are closed under intersection, neither set is regular.
\end{proof}

Since \eqref{eq:balance} and \eqref{eq:tenpct} depend on $w$ only through $\parikh{w}$,
they are carried exactly by a dynamic program over pairs (automaton state, partial Parikh vector),
developed in Section~\ref{sec:counting}. The same construction accommodates any constraint that depends on
the word only through its Parikh vector, given counters that carry what that constraint needs.

\subsection{Symmetric and non-symmetric laminates}
\label{sec:halfstack}

A symmetric laminate of $N=2m$ plies is written
$w=h\,h^{R}$, where $h$ is the \emph{half-stack} of $m$ plies and $h^{R}$ its reversal.

\begin{lemma}
  \label{lem:mirror}
  Let $N=2m$ and $w=h\,h^{R}$. Then $w$ satisfies \textnormal{(M1)} and \textnormal{(M2)} if and
  only if $h$ does and $h$ ends with a run of length one.
\end{lemma}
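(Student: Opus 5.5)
We work through Proposition~\ref{prop:anti}: $w$ satisfies (M1) and (M2) exactly when no factor of $w$ lies in $\anti$. Any factor of $w=h\,h^{R}$ lies inside $h$, inside $h^{R}$, or straddles the mid-plane between positions $m$ and $m+1$. A word avoids $\anti$ if and only if its reversal does, because $\anti$ is closed under reversal: $\delta$ is symmetric, so $ab\in\anti$ iff $ba\in\anti$, and $aaaa$ is a palindrome. Hence the factors inside $h^{R}$ are harmless exactly when those inside $h$ are, and it remains to analyse the straddling factors.

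Let $a$ be the last letter of $h$ and $r$ the length of its final run. Then $w_{m}=w_{m+1}=a$.

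\emph{Forbidden pairs.} The only straddling pair is $aa$, and $\delta(a,a)=0\le 45$, so (M1) is never violated at the mid-plane.

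\emph{Forbidden quadruples.} The final run of $h$ and the initial run of $h^{R}$ merge into one run of $a$ of length exactly $2r$ in $w$. It is bounded by the letter before the final run of $h$ and by its mirror image, which differ from $a$, or by the ends of $w$. Conditions (M2) on $w$ thus require $2r\le 3$, that is, $r=1$.

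Conversely, if $r=1$, any straddling window of length four other than the run contains a letter different from $a$, so it is not of the form $bbbb$. Runs inside $h$ are already covered by $h$ satisfying (M2).

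\emph{Both directions.} For the forward direction, restriction to the factor $h$ shows that $h$ satisfies the rules, and the run argument forces $r=1$. For the converse, the three classes of factors (inside $h$, inside $h^{R}$, straddling) are each free of $\anti$.

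The only delicate step is checking that the merged run has length exactly $2r$ and not more. This holds because a maximal run in $h$ is preceded by a different letter or by the start of $h$, and the mirror copies that boundary on the other side. The degenerate case $m=0$ is trivial, and $m\ge 1$ guarantees that $h$ has a last letter.
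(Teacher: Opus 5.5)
Your proof is correct and follows the paper's argument. You split the factors of $w$ into those inside $h$, those inside $h^{R}$ (handled by the closure of $\anti$ under reversal), and those straddling the mid-plane, where $\delta(a,a)=0$ settles (M1) and the merged run of length $2r$ settles (M2) exactly when $r=1$. Your explicit check of the straddling length-four windows when $r=1$ is simply a more detailed version of the paper's one-line run argument.
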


\begin{proof}
  The factors of $w$ that are not factors of $h$ or of $h^{R}$ are those straddling the
  mid-plane, where the last ply of $h$ is immediately followed by its own copy. Rule (M1) is
  satisfied there because $\delta(a,a)=0$. Rule (M2) is satisfied there if and only if the run
  ending $h$, of length $r$, does not become a run of length $2r>3$, that is if and only if
  $r=1$. Reversal preserves both rules, since $\anti$ is closed under reversal.
\end{proof}

\begin{remark}
For odd $N$ the mid-plane carries a single ply that its own reflection does not duplicate, so
the run condition of Lemma~\ref{lem:mirror} is replaced by one on the run that ply terminates.
The Parikh vector becomes $2\parikh{h}+e_{c}$, where $c$ is the central ply, so balance forces
$c\in\{0,90\}$ and the thresholds depend on $c$. We treat only even $N$ below.
\end{remark}

\begin{lemma}
  \label{lem:parikh-half}
  With $w=h\,h^{R}$ we have $\parikh{w}=2\,\parikh{h}$, so \eqref{eq:balance} and
  \eqref{eq:tenpct} hold for $w$ if and only if $c_{+\theta}(h)=c_{-\theta}(h)$ for every
  $\theta$ and $2c_{a}(h)\ge 0.1\,N$ for every $a\in\{0,45,-45,90\}$.
\end{lemma}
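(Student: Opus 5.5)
The argument is a direct computation of the Parikh vector of a concatenation, followed by substitution into \eqref{eq:balance} and \eqref{eq:tenpct}. First I would record two elementary facts about Parikh vectors. They are additive under concatenation, $\parikh{uv}=\parikh{u}+\parikh{v}$, because each occurrence of a letter in $uv$ lies in exactly one of $u$ or $v$. They are also invariant under reversal, $\parikh{h^{R}}=\parikh{h}$, because reversal permutes positions and keeps letters. Applying both to $w=h\,h^{R}$ gives $\parikh{w}=\parikh{h}+\parikh{h^{R}}=2\,\parikh{h}$, which is the first claim.

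Next I would substitute this identity into the two guidelines, coordinate by coordinate. For balance, $c_{+\theta}(w)=c_{-\theta}(w)$ reads $2c_{+\theta}(h)=2c_{-\theta}(h)$. Dividing by $2$ shows this is equivalent to $c_{+\theta}(h)=c_{-\theta}(h)$, for each $\theta$ with $\pm\theta\in\Sig$. For the ten per cent rule, $c_{a}(w)\ge 0.1\,N$ reads $2c_{a}(h)\ge 0.1\,N$ for each $a\in\{0,45,-45,90\}$, with $N=2m$ the full ply count. Each step is an equivalence, and the guidelines are conjunctions over $\theta$ and $a$, so the conjunction for $w$ holds exactly when the conjunction for $h$ holds.

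There is no real obstacle. The one point needing care is that the threshold must stay $0.1\,N$ with $N$ the full ply count; it must not be silently replaced by $0.1\,m$. Written as $c_a(h)\ge 0.05\,N=0.1\,m$, the condition is the same, but the statement keeps it in the form $2c_a(h)\ge 0.1\,N$, so the proof should not rewrite it. Since the inequality involves non-integer reals, I would also check that nothing depends on rounding. This holds because the equivalence is taken at the level of real inequalities before any ceiling is applied.
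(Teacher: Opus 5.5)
Your proof is correct and takes the paper's route: the paper's proof is the single remark that $h^{R}$ is a permutation of $h$, hence $\parikh{w}=2\,\parikh{h}$. You spell out what that remark leaves implicit, namely additivity of the Parikh vector under concatenation and the coordinatewise substitution into \eqref{eq:balance} and \eqref{eq:tenpct}.
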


\begin{proof}
  Immediate, since $h^{R}$ is a permutation of $h$.
\end{proof}

Lemma~\ref{lem:mirror} turns symmetry into a choice of accepting states, those with run one
(the double circles of Figure~\ref{fig:dfa}), and Lemma~\ref{lem:parikh-half} halves the
counters.

A non-symmetric laminate is read whole; ``non-symmetric'' means here that symmetry is not
imposed, so the class contains the symmetric laminates, a share below $2\times 10^{-6}$ of it at
every ply count tabulated. Rules (M1) and (M2) apply to the word of $N$ plies,
balance and the $10\%$ rule to its Parikh vector, and every state is accepting. We write $L$ for the length of the word the
automaton reads, $L=m=N/2$ for a symmetric laminate and $L=N$ otherwise, and $t$ for the smallest
value of a counter of that word that satisfies the $10\%$ rule, that is the least integer with
$2t\ge 0.1N$ in the symmetric case and $t\ge 0.1N$ otherwise. Everything that follows holds for
both unless stated otherwise.

\begin{definition}
  \label{def:admissible}
  Fix $N$ and the guidelines imposed, a subset $G$ of symmetry, balance \eqref{eq:balance} and the
  $10\%$ rule \eqref{eq:tenpct}. A laminate is \emph{admissible} if it satisfies \textnormal{(M1)},
  \textnormal{(M2)} and every guideline in $G$, and \emph{compliant} if it is admissible with
  balance and the $10\%$ rule in $G$, the setting of Sections~\ref{sec:sampling} and~\ref{sec:search}. A word read by the automaton, $h$ or $w$, is admissible or compliant when its
  laminate is. A \emph{completion} of a word $u$ of length $j<L$ is a word $v$ of length $L-j$ such
  that $uv$ is admissible. With $G$ empty the admissible words are the \emph{manufacturing
  language} $\lang(\anti)\cap\Sig^{L}$.
\end{definition}

The statements of Section~\ref{sec:counting} hold for every $G$.

\section{Counting}
\label{sec:counting}

\subsection{The transfer matrix}

The term is meant in its combinatorial sense \cite[Ex.~V.11]{flajolet2009analytic}, unrelated to
the transfer-matrix method of laminate mechanics.

Let $A$ be the $|Q|\times|Q|$ adjacency matrix of the automaton of Theorem~\ref{thm:regular},
$A_{qq'}=1$ if $\delta_{Q}(q,b)=q'$ for some $b\in\Sig$ and $0$ otherwise; since the state
reached determines the ply read, walks and words correspond one to one. By
Lemma~\ref{lem:mirror} the admissible half-stacks of $m$ plies are the walks of length $m-1$
that start and end in a state with run one, so their number is
\begin{equation}
  \label{eq:transfer}
  \bigl|\lang_{m}\bigr| \;=\; \boldsymbol{u}^{\top}A^{\,m-1}\boldsymbol{u},
  \qquad
  u_{q}=\begin{cases}1 & \text{if } q=(a,1) \text{ for some } a\in\Sig,\\ 0 &\text{otherwise.}\end{cases}
\end{equation}
For a non-symmetric laminate the count is $\boldsymbol{u}^{\top}A^{\,N-1}\boldsymbol{1}$. Equation~\eqref{eq:transfer} is an exact
integer computation whose cost grows with the number of digits of the count and not with the
count, and produces no sequence.

\subsection{The dynamic program on the Parikh counters}
\label{sec:dp}

We work on \emph{cells} $(q,\kappa)$, where $q$ is an automaton state and $\kappa$
a vector of counters holding what the guidelines need to know of the Parikh vector read so
far:
\begin{itemize}
  \item for each principal direction $a\in\{0,\pm 45,90\}\cap\Sig$, the count $\min(c_{a},t)$,
    since beyond $t$ the $10\%$ rule is indifferent;
  \item for each pair $\pm\theta\subseteq\Sig$ with $\theta>0$, when balance is imposed, the
    difference $d_{\theta}=c_{+\theta}-c_{-\theta}$.
\end{itemize}
Reading a ply $b$ updates $\kappa$ to a vector $\kappa\oplus b$ that depends on $\kappa$ and $b$
only. Let $\operatorname{need}(\kappa)$ be the least number of plies that, appended
to a word with counters $\kappa$, satisfy every imposed guideline, ignoring the automaton: the
deficits $t-\min(c_{a},t)$ of the thresholded directions that belong to no pair, $|d_{\theta}|$ for
the pairs without threshold, and $2\max(c_{45},c_{-45},t)-c_{45}-c_{-45}$ for $\pm 45$ when both are imposed, which the
capped counts and $d_{45}$ determine; a guideline that is not imposed contributes nothing, and with
the $10\%$ rule off $t=0$. A cell is \emph{accepting} when $\operatorname{need}(\kappa)=0$
and, for a symmetric laminate, $q$ has run one (Lemma~\ref{lem:mirror}).

\begin{lemma}
  \label{lem:cells}
  Two words of the same length that satisfy \textnormal{(M1)} and \textnormal{(M2)} and reach the
  same cell have the same completions (Definition~\ref{def:admissible}). Moreover
  $\operatorname{need}(\kappa\oplus b)\ge\operatorname{need}(\kappa)-1$ for every $b$, so a word of
  length $j$ whose cell has $\operatorname{need}(\kappa)>L-j$ has no completion, and neither has
  any of its extensions.
\end{lemma}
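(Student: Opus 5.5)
The proof has two parts: that the cell determines the set of completions, and that $\operatorname{need}$ drops by at most one per ply.

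\emph{Same cell, same completions.} Let $u,u'$ have length $j$, satisfy (M1) and (M2), and reach the same cell $(q,\kappa)$, and let $v$ have length $L-j$. By Theorem~\ref{thm:regular} the automaton is deterministic and accepts exactly $\lang(\anti)$. Reading $v$ from $q$ is therefore defined for $u$ exactly when it is for $u'$, and ends in the same state. Since $\kappa\oplus b$ depends only on $\kappa$ and $b$, the counters after $uv$ and after $u'v$ agree as well, by induction on $|v|$. The rules (M1) and (M2) hold for $uv$ if and only if the walk from $q$ is defined, and then the final cells coincide.

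It remains to check that admissibility of the whole word is a function of the final cell. For symmetry this is Lemma~\ref{lem:mirror}: the run-one condition is read off $q$. For balance and the $10\%$ rule it follows from Lemma~\ref{lem:parikh-half} in the symmetric case, and directly otherwise. In either case these guidelines hold exactly when the deficits all vanish. Each deficit is computed from the capped counts and the differences: $c_a\ge t$ iff $\min(c_a,t)=t$, and balance holds iff every $d_\theta=0$. So they hold exactly when $\operatorname{need}(\kappa)=0$, which is the definition of an accepting cell.

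\emph{The Lipschitz bound.} Check each summand of $\operatorname{need}$ under appending one ply $b$. A deficit $t-\min(c_a,t)$ of an unpaired direction decreases by at most one, and only when $b=a$. A term $|d_\theta|$ changes by exactly one when $b=\pm\theta$ and is unchanged otherwise. For $\pm45$ with both guidelines, the term is $2\max(c_{45},c_{-45},t)-c_{45}-c_{-45}$. Appending $45$ raises $c_{45}$ by one, so the sum drops by one, while the max stays the same or rises by one; the net change is $-1$ or $+1$. The case $b=-45$ is symmetric, and any other ply leaves the term unchanged.

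The point to get right is that one ply touches at most one summand, because the summands involve disjoint sets of letters. Hence the total decreases by at most one per ply.

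\emph{Consequences.} By induction, appending $L-j$ plies lowers $\operatorname{need}$ by at most $L-j$. If $\operatorname{need}(\kappa)>L-j$, every word of length $L$ extending $u$ has positive $\operatorname{need}$, so none is admissible and $u$ has no completion. For an extension $ub$ of length $j+1$ we have $\operatorname{need}(\kappa\oplus b)\ge \operatorname{need}(\kappa)-1> L-(j+1)$, so the hypothesis persists and $ub$ has no completion either.

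The main obstacle is confirming that $\operatorname{need}$ as defined really vanishes exactly on the guideline-satisfying counters. This includes showing that the $\pm45$ term equals the true minimum number of plies: it must raise both counts to $\max(c_{45},c_{-45},t)$, and no smaller common value satisfies both balance and the threshold.
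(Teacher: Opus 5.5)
Your proof is correct and follows the paper's argument. The first claim comes from the determinism of the automaton together with the fact that $\oplus$ depends only on $\kappa$ and $b$; the second comes from one ply moving a single summand of $\operatorname{need}$ by at most one. You are more explicit than the paper, which neither checks the combined $\pm 45$ term nor spells out that acceptance depends only on the final cell. Your closing remark that the $\pm 45$ term is the true minimum is not needed: the argument uses only that $\operatorname{need}$ vanishes exactly on guideline-satisfying counters and the one-step bound.
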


\begin{proof}
  By \eqref{eq:delta} the continuations readable after a prefix depend on it only through $q$, and
  whether a completion satisfies the guidelines depends on the prefix only through $\kappa$, which
  $\oplus$ updates without reference to anything else. A ply raises one entry of the Parikh
  vector by one, so it moves a single term of $\operatorname{need}$, that of its direction or of
  its pair, and lowers it by at most one; the remaining $L-j$ plies can therefore close at most
  $L-j$ units of deficit.
\end{proof}

The program has two passes (Algorithm~\ref{alg:table}). The forward pass
generates the reachable cells level by level, discarding by $\operatorname{need}$; the backward pass
assigns to each cell $F_{j}(q,\kappa)$, the number of admissible completions of any prefix of length
$j$ that reaches it. The number of admissible words is the sum of $F_{1}$ over the first level. The
threshold $t$ is computed in integer arithmetic, as the least integer with $10t\ge N$, or
$20t\ge N$ for a half-stack.

\begin{algorithm}[!htbp]
  \caption{\textsc{Automaton}$(\Sig)$: the manufacturing rules (M1)--(M2).}
  \label{alg:dfa}
  \begin{algorithmic}[1]
    \State $Q\gets\{(a,r) : a\in\Sig,\ 1\le r\le 3\}$
    \For{$(a,r)\in Q$ \textbf{and} $b\in\Sig$}
      \If{$b=a$ \textbf{and} $r<3$} \State $\delta_{Q}((a,r),b)\gets(a,r+1)$
        \Comment{extend the run}
      \ElsIf{$b\ne a$ \textbf{and} $\delta(a,b)\le 45$} \State $\delta_{Q}((a,r),b)\gets(b,1)$
        \Comment{open a new run}
      \EndIf \Comment{otherwise undefined: $b$ completes a factor of $\anti$}
    \EndFor
    \State \Return $(Q,\delta_{Q})$
  \end{algorithmic}
\end{algorithm}

\begin{algorithm}[!htbp]
  \caption{\textsc{Table}$(\Sig,N,\textit{options})$: completion counts on the cells.}
  \label{alg:table}
  \begin{algorithmic}[1]
    \Require options: symmetric or not, balance on or off, $10\%$ rule on or off
    \State $(Q,\delta_{Q})\gets\textsc{Automaton}(\Sig)$;\quad
      $L\gets N/2$ if symmetric, else $N$
    \State $C_{1}\gets\{((a,1),\,\mathbf{0}\oplus a) : a\in\Sig,\ \operatorname{need}(\mathbf{0}\oplus a)\le L-1\}$
    \For{$j=1,\dots,L-1$} \Comment{forward: reachable cells}
      \For{$(q,\kappa)\in C_{j}$ \textbf{and} $b$ with $\delta_{Q}(q,b)$ defined}
        \If{$\operatorname{need}(\kappa\oplus b)\le L-j-1$}
          \State add $(\delta_{Q}(q,b),\kappa\oplus b)$ to $C_{j+1}$
        \EndIf
      \EndFor
    \EndFor
    \For{$(q,\kappa)\in C_{L}$}
      \State $F_{L}(q,\kappa)\gets 1$ if $\operatorname{need}(\kappa)=0$ and ($q$ has run one or not symmetric), else $0$
    \EndFor
    \For{$j=L-1,\dots,1$} \Comment{backward: completions}
      \For{$(q,\kappa)\in C_{j}$}
        \State $F_{j}(q,\kappa)\gets\sum_{b} F_{j+1}(\delta_{Q}(q,b),\kappa\oplus b)$
          \Comment{over defined $\delta_{Q}$; absent cells count $0$}
      \EndFor
    \EndFor
    \State \Return $F$, and $|\lang|=\sum_{c\in C_{1}}F_{1}(c)$
  \end{algorithmic}
\end{algorithm}

\begin{algorithm}[!htbp]
  \caption{\textsc{Sample}$(F)$ and \textsc{Completable}$(F,p)$.}
  \label{alg:sample}
  \begin{algorithmic}[1]
    \If{$|\lang|=0$} \State \Return none \Comment{the empty set, certified} \EndIf
    \State draw $c\in C_{1}$ with probability $F_{1}(c)/|\lang|$;\quad $w\gets$ the ply of $c$
    \For{$j=1,\dots,L-1$}
      \State draw $b$ with probability $F_{j+1}(\delta_{Q}(q,b),\kappa\oplus b)/F_{j}(q,\kappa)$, where $c=(q,\kappa)$
      \State $c\gets(\delta_{Q}(q,b),\kappa\oplus b)$;\quad append $b$ to $w$
    \EndFor
    \State \Return $w$ \Comment{uniform on $\lang$; mirror it if symmetric}
    \Statex
    \Function{Completable}{$F,p$} \Comment{$p$ a prefix of length $j$}
      \If{$j=0$} \Return $|\lang|>0$ \EndIf
      \State walk $p$ through $\delta_{Q}$ from $q_{0}$ and $\oplus$ from $\mathbf{0}$
      (Theorem~\ref{thm:regular}); \Return false if a transition is undefined
      \State \Return $F_{j}(q,\kappa)>0$ for the cell $(q,\kappa)$ reached
    \EndFunction
  \end{algorithmic}
\end{algorithm}

\begin{theorem}
  \label{thm:poly}
  Let $\Sig$ be fixed, with $p$ principal directions and $k$ sign pairs. Algorithm~\ref{alg:table}
  computes the number $|\lang|$ of admissible words of length $L$ exactly, in
  \[ O\bigl(|\Sig|\,|Q|\,L\,(t+1)^{p}(2L+1)^{k}\bigr) \]
  arithmetic operations on integers of $O(L\log|\Sig|)$ bits, hence in time polynomial in $N$.
  After it, Algorithm~\ref{alg:sample} draws a uniformly distributed admissible word in $O(|\Sig|L)$
  operations and decides whether a prefix of length $j$ is completable in $O(j)$.
\end{theorem}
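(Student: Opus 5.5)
The proof has three parts: correctness of the table, the size bound, and the cost of the two queries. For correctness I prove by backward induction on $j$ that for every cell $(q,\kappa)\in C_{j}$, $F_{j}(q,\kappa)$ is the number of admissible completions of any prefix of length $j$ reaching that cell. Lemma~\ref{lem:cells} makes this quantity well defined.

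The base case $j=L$ follows from the definition of an accepting cell. The acceptance condition uses Lemma~\ref{lem:mirror} and Lemma~\ref{lem:parikh-half} in the symmetric case, and the fact that $\operatorname{need}(\kappa)=0$ exactly when the imposed guidelines hold. For that equivalence I check that the capped counts and the differences $d_{\theta}$ determine the guidelines.

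For the inductive step, a completion of $u$ is a ply $b$ with $\delta_{Q}(q,b)$ defined, followed by a completion of $ub$. These sets are disjoint for distinct $b$, which gives the recurrence. Cells pruned in the forward pass have $\operatorname{need}>L-j$ and so, by Lemma~\ref{lem:cells}, contribute $0$; treating absent cells as $0$ is therefore correct. Summing $F_{1}$ over $C_{1}$ counts every admissible word once, because the first ply determines the first cell.

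For the complexity I bound the number of cells per level. There are $|Q|$ automaton states. Each capped principal count lies in $\{0,\dots,t\}$, and each difference $d_{\theta}$ lies in $[-L,L]$. This gives at most $|Q|(t+1)^{p}(2L+1)^{k}$ cells per level and $L$ levels. Each cell does $|\Sig|$ transitions, each costing one update of $\kappa$ and one lookup or addition; I assume a hash or array indexing with constant-time access in the arithmetic model. Every $F_{j}$ is at most $|\Sig|^{L}$, so the integers have $O(L\log|\Sig|)$ bits. Since $t\le N$, $L\le N$ and $p,k,|Q|,|\Sig|$ are constants, the bound is polynomial in $N$.

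For sampling, the probabilities at each step sum to $1$ by the recurrence. Multiplying them along the path telescopes to $F_{L}/F_{1}(c)\cdot F_{1}(c)/|\lang| = 1/|\lang|$ for every admissible word, and each step costs $O(|\Sig|)$. For the completion test, the walk takes $j$ transitions and is followed by one lookup. The prefix is completable iff it survives the automaton and reaches a cell with $F_{j}>0$; a cell missing from $C_{j}$ has no completion and is read as $0$.

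The main obstacle is the base-case equivalence for the paired principal directions $\pm45$. There I must check that $\operatorname{need}$, computed from capped counts and $d_{45}$, vanishes exactly when both balance and both thresholds hold. I must also check the $-1$ monotonicity that justifies the pruning. I would argue both by a short case analysis on which of $c_{45},c_{-45},t$ attains the maximum.
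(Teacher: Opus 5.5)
Your proposal is correct and follows the paper's own proof. Both use Lemma~\ref{lem:cells} for well-definedness of $F_{j}$, a backward recursion that counts each completion once by its next ply with discarded cells contributing zero, the per-level bound $|Q|(t+1)^{p}(2L+1)^{k}$ with $|\Sig|$ successors per cell, counts at most $|\Sig|^{L}$, and the telescoping product for the sampler. The $\pm 45$ case analysis and the $-1$ monotonicity that you flag as the main obstacle are already settled by the definition of $\operatorname{need}$ and by Lemma~\ref{lem:cells}, which the paper's proof simply cites.
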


\begin{proof}
  Correctness: by Lemma~\ref{lem:cells} the prefixes that reach a cell share their completions, so
  $F_{j}$ is well defined, the recursion of the backward pass counts each admissible completion
  exactly once by its next ply, and the discarded cells contribute nothing. Each level holds at most
  $|Q|$ states times $(t+1)^{p}$ capped counts times $(2L+1)^{k}$ differences, and each cell has at
  most $|\Sig|$ successors. A count never exceeds $|\Sig|^{L}$. The walk of
  Algorithm~\ref{alg:sample} chooses each ply with probability equal to the share of completions it
  keeps, so the product telescopes to $1/|\lang|$ for every admissible word.
\end{proof}

The operations counted are additions and comparisons of integers of that size; a draw takes a
uniform integer below $F_{j}$, and the cells are kept in hash tables with constant expected access.

With the discarding by $\operatorname{need}$, the largest level at $N=48$ on $\Sigeight$ holds
$213\,202$ cells. When the distribution over compositions
is wanted, as below, the counters are carried
uncapped, at a cost in cells.

\subsection{Cost of the design guidelines}

\begin{table*}[!t]
  \centering
  \caption{Exact number of admissible symmetric half-stacks of $N/2$ plies on the three alphabets:
    (a) under the manufacturing rules alone, (b) under the rules together with the design
    guidelines, with the ratios between alphabets. In the $\Sigtwelve$ columns values of
    $10^{7}$ and above are rounded to three significant figures; the deposited code returns the integers.}
  \label{tab:counts}
  \begin{tabular}{r rrr}
    \toprule
    & \multicolumn{3}{c}{(a) manufacturing rules only}\\
    \cmidrule(lr){2-4}
    $N$ & $\Sigfour$ & $\Sigeight$ & $\Sigtwelve$\\
    \midrule
    24 & $376\,704$ & $102\,102\,368$ & $1.99\times 10^{10}$ \\
    40 & $1\,988\,997\,120$ & $17\,885\,340\,482\,800$ & $1.13\times 10^{17}$ \\
    44 & $16\,954\,798\,592$ & $365\,900\,128\,074\,196$ & $5.49\times 10^{18}$ \\
    48 & $144\,527\,708\,160$ & $7\,485\,622\,309\,145\,552$ & $2.68\times 10^{20}$ \\
    54 & $3\,596\,986\,742\,784$ & $692\,669\,292\,246\,653\,848$ & $9.11\times 10^{22}$ \\
    \bottomrule
  \end{tabular}

  \vspace{1.2em}
  \begin{tabular}{r rrr rr}
    \toprule
    & \multicolumn{3}{c}{(b) rules and guidelines} & & \\
    \cmidrule(lr){2-4}
    $N$ & $\Sigfour$ & $\Sigeight$ & $\Sigtwelve$ & $\Sigeight/\Sigfour$ & $\Sigtwelve/\Sigeight$\\
    \midrule
    24 & $29\,560$ & $182\,020$ & $782\,348$ & $6.2\times$ & $4.3\times$ \\
    40 & $175\,964\,268$ & $39\,709\,914\,088$ & $6.32\times 10^{12}$ & $226\times$ & $159\times$ \\
    44 & $1\,302\,502\,076$ & $398\,548\,747\,372$ & $3.46\times 10^{13}$ & $306\times$ & $87\times$ \\
    48 & $11\,236\,189\,644$ & $8\,999\,374\,621\,960$ & $2.18\times 10^{15}$ & $801\times$ & $243\times$ \\
    54 & $277\,909\,460\,780$ & $881\,013\,334\,451\,080$ & $8.49\times 10^{17}$ & $3\,170\times$ & $964\times$ \\
    \bottomrule
  \end{tabular}
\end{table*}

At $N=48$ the guidelines divide the language by $13$ on $\Sigfour$ and by $832$ on $\Sigeight$ (Table~\ref{tab:counts}). On $\Sigeight$ balance acts on three pairs instead of one, and the $10\%$ rule demands that $0$,
$\pm 45$ and $90$ all be present in a laminate whose plies may also be spent on $\pm 30$ and
$\pm 60$, which do not help to satisfy it.

Hence the size advantage of the extended alphabet falls from $51\,794\times$ on the
manufacturing language to $801\times$ under the guidelines. On $\Sigtwelve$ the guidelines divide the language by
$1.2\times 10^{5}$, and its advantage over $\Sigeight$ falls from $35\,700\times$ to
$243\times$.

\subsection{Non-symmetric and unbalanced laminates}
\label{sec:nonsym}

An unbalanced laminate couples
extension with shear and strengthens the coupling of bending with twisting, a non-symmetric one
couples extension with bending \cite{shirk1986aeroelastic}. Algorithm~\ref{alg:table} counts every class with the
corresponding options (Section~\ref{sec:halfstack}); for the non-symmetric balanced class, which
carries a difference for every sign pair over the whole length, we count by meeting in the middle.

\begin{proposition}
  \label{prop:mitm}
  Let $N$ be even and let $G(q,\kappa)$ be the number of words of $N/2$ plies satisfying
  \textnormal{(M1)}--\textnormal{(M2)} that reach the cell $(q,\kappa)$, obtained by carrying these
  counts through the forward pass of Algorithm~\ref{alg:table} for non-symmetric words of
  length $N$. Then the number of admissible non-symmetric laminates of $N$ plies is
  \[
    \sum G(q,\kappa)\,G(q',\kappa'),
  \]
  over the pairs of cells such that the junction is admissible (if $q=(a,r)$ and $q'=(b,r')$, then
  $r+r'\le 3$ when $a=b$ and $\delta(a,b)\le 45$ otherwise), every sign-pair difference of $\kappa'$
  is the opposite of that of $\kappa$, and $\min(c_{a},t)+\min(c'_{a},t)\ge t$ for every principal
  direction $a$.
\end{proposition}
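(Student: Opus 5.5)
The plan is to establish a bijection between admissible non-symmetric laminates $w$ of $N$ plies and pairs $(x,y)$ of words of $N/2$ plies, and then to check that the admissibility of $w=xy$ is a property of the pair of cells reached by $x$ and $y$. Every word $w$ of length $N=2m$ splits uniquely as $w=xy$ with $|x|=|y|=m$. The key steps are:
\begin{enumerate}
  \item Show that (M1)--(M2) hold for $w$ if and only if they hold for $x$ and $y$ and the junction is admissible.
  \item Show that balance and the $10\%$ rule for $w$ depend only on the counters of the two halves.
  \item Group the pairs $(x,y)$ by their cells, so that the total becomes the stated sum of products.
\end{enumerate}

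For the first step I would proceed as in Lemma~\ref{lem:mirror}. The factors of $w$ that are not factors of $x$ or of $y$ are the windows straddling position $m$. A forbidden pair straddling the cut can only be $x_m y_1$. A forbidden quadruple $aaaa$ straddling the cut exists exactly when $x$ ends with a run of $a$ of length $r$, $y$ begins with a run of $a$ of length $r'$, and $r+r'\ge 4$.

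The main obstacle is the cell of $y$. The statement takes $G$ from the forward pass, so $y$ is also read forwards from $q_0$, and its state $(b,r')$ records its \emph{last} ply and final run, not its first. I would resolve this by reversal. The antidictionary $\anti$ is closed under reversal (Lemma~\ref{lem:mirror}), and the Parikh vector is invariant under reversal, so $y\mapsto y^{R}$ is a bijection of the words of $m$ plies satisfying (M1)--(M2) that preserves $\kappa$. I therefore parametrise the second half by $z=y^{R}$, whose cell $(q',\kappa')$ has $q'=(b,r')$ with $b$ the first ply of $y$ and $r'$ its initial run. With this choice the junction conditions of the statement are exactly those of the first step: $r+r'\le 3$ when $a=b$, and $\delta(a,b)\le 45$ otherwise.

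For the guideline part, $\parikh{w}=\parikh{x}+\parikh{y}$. Hence $d_\theta(w)=d_\theta(x)+d_\theta(y)$, and balance for $w$ is $d_\theta(x)=-d_\theta(y)$ for every sign pair. For the $10\%$ rule, $c_a(x)+c_a(y)\ge t$ holds if and only if $\min(c_a(x),t)+\min(c_a(y),t)\ge t$, since capping a summand at $t$ does not affect whether the sum reaches $t$. These conditions depend only on $\kappa,\kappa'$, which are determined by the words by Lemma~\ref{lem:cells}.

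The forward pass discards cells by $\operatorname{need}$ measured against length $N$. I would note that $\operatorname{need}(\kappa)$ for any half of an admissible $w$ is at most $m\le N-m$, since the other half closes the deficit. Hence no half of an admissible laminate is discarded, and $G$ counts every relevant half. Summing over the partition of the pairs $(x,z)$ by cell pairs then gives $\sum G(q,\kappa)G(q',\kappa')$ over the admissible cell pairs, as claimed.
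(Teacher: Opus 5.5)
Your proposal is correct and follows the paper's proof step for step: you split $w=xy$ at the middle, read $y$ reversed so that its state records the first ply and initial run of $y$, check the windows straddling the junction, use additivity of the Parikh vector (capping at $t$ does not change whether a sum reaches $t$), and observe that the pruning by $\operatorname{need}$ discards no half of an admissible laminate. The only thing to tighten is that the forward pass prunes at every level, so state the bound as $\operatorname{need}\le N-j$ for each prefix of length $j$ of $x$ and of $y^{R}$ (the rest of $w$ closes the deficit), and not only at $j=m$.
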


\begin{proof}
  Write the laminate as $uv$ with $|u|=|v|=N/2$. The antidictionary \eqref{eq:anti} is closed
  under reversal, and the Parikh constraints do not depend on the order, so $v$ is admissible
  exactly when its reversal $v^{R}$ is an admissible prefix of the same length; the state reached by
  $v^{R}$ is the pair (first ply of $v$, length of its first run). The factors of $uv$ that are
  factors of neither $u$ nor $v$ straddle the junction; they break (M1) exactly when
  $\delta(a,b)>45$, and (M2) exactly when $a=b$ and the two runs together exceed three. The Parikh
  vector of $uv$ is the sum of those of $u$ and $v$, so balance holds iff the differences are
  opposite, and a threshold holds iff the sum of the counts reaches $t$, which the capped counts
  decide. The pruning of the forward pass removes only prefixes that no completion of length $N$
  can save (Lemma~\ref{lem:cells}), hence none that the sum needs.
\end{proof}

Proposition~\ref{prop:mitm} needs only the middle level, one table serving both halves. With the counts of each half exact and the sum
taken modulo four primes of $61$ bits and rebuilt by the Chinese remainder theorem, it reproduces
all $56$ non-symmetric counts obtained with Algorithm~\ref{alg:table}, and reaches $N=54$ on
$\Sigtwelve$ with $1.1\times 10^{9}$ cells at the middle level (Table~\ref{tab:classes}).

\begin{table*}[!t]
  \centering
  \caption{Exact number of admissible laminates of $N$ plies under the manufacturing rules and the
    $10\%$ rule, for the three classes other than symmetric and balanced (Table~\ref{tab:counts}):
    symmetric unbalanced, non-symmetric balanced, non-symmetric unbalanced.
    Values of $10^{7}$ and above are rounded to three significant figures; the deposited code
    returns the integers.}
  \label{tab:classes}
  \begin{tabular}{l r rrr}
    \toprule
    & & symmetric & \multicolumn{2}{c}{non-symmetric}\\
    \cmidrule(lr){3-3}\cmidrule(lr){4-5}
    & $N$ & unbalanced & balanced & unbalanced\\
    \midrule
    $\Sigfour$ & 24 & $147\,112$ & $1.62\times 10^{10}$ & $1.63\times 10^{11}$ \\
     & 40 & $1.69\times 10^{9}$ & $3.91\times 10^{17}$ & $5.64\times 10^{18}$ \\
     & 44 & $1.18\times 10^{10}$ & $2.69\times 10^{19}$ & $4.02\times 10^{20}$ \\
     & 48 & $1.13\times 10^{11}$ & $1.90\times 10^{21}$ & $3.01\times 10^{22}$ \\
     & 54 & $3.12\times 10^{12}$ & $1.11\times 10^{24}$ & $1.87\times 10^{25}$ \\
    \midrule
    $\Sigeight$ & 24 & $4\,128\,968$ & $1.11\times 10^{13}$ & $1.64\times 10^{15}$ \\
     & 40 & $6.05\times 10^{12}$ & $2.51\times 10^{23}$ & $1.32\times 10^{26}$ \\
     & 44 & $4.15\times 10^{13}$ & $7.75\times 10^{25}$ & $3.98\times 10^{28}$ \\
     & 48 & $1.32\times 10^{15}$ & $3.32\times 10^{28}$ & $2.30\times 10^{31}$ \\
     & 54 & $1.98\times 10^{17}$ & $2.23\times 10^{32}$ & $1.72\times 10^{35}$ \\
    \midrule
    $\Sigtwelve$ & 24 & $3.33\times 10^{7}$ & $2.49\times 10^{15}$ & $1.57\times 10^{18}$ \\
     & 40 & $4.27\times 10^{15}$ & $4.80\times 10^{28}$ & $1.91\times 10^{32}$ \\
     & 44 & $1.31\times 10^{16}$ & $2.67\times 10^{31}$ & $9.63\times 10^{34}$ \\
     & 48 & $1.35\times 10^{18}$ & $1.01\times 10^{35}$ & $5.68\times 10^{38}$ \\
     & 54 & $1.14\times 10^{21}$ & $3.69\times 10^{39}$ & $2.32\times 10^{43}$ \\
    \bottomrule
  \end{tabular}
\end{table*}

At $N=48$ the balanced non-symmetric laminates outnumber the symmetric ones by
$1.7\times 10^{11}$ on $\Sigfour$, $3.7\times 10^{15}$ on $\Sigeight$ and $4.6\times 10^{19}$ on
$\Sigtwelve$. Dropping balance multiplies the non-symmetric set by $16$, $691$ and $5\,648$. The symmetric
unbalanced class is $10.0$,
$147$ and $619$ times the symmetric balanced one. Every algorithm of this paper applies to
these sets wherever their full table fits in memory: sampling and the completion test use it as it
is, and the crossover of Section~\ref{sec:search} is used without the run-one condition that a
symmetric laminate imposes at its mid-plane (Lemma~\ref{lem:mirror}). The balanced non-symmetric
class on $\Sigtwelve$ is counted beyond that point, from $N=40$, by Proposition~\ref{prop:mitm}
alone, and is not sampled here.

\subsection{Ceilings on the composition}

\begin{proposition}
  \label{prop:ceiling}
  Let $t$ be the smallest half-stack counter satisfying the $10\%$ rule, and let $h$ be a
  compliant half-stack of $m$ plies. Then
  \begin{equation}
    \label{eq:ceiling}
    c_{a}(h)\ \le\
    \begin{cases}
      m-3t, & a\in\{0,90\},\\[2pt]
      \bigl\lfloor (m-2t)/2 \bigr\rfloor, & a\in\{45,-45\},\\[2pt]
      \bigl\lfloor (m-4t)/2 \bigr\rfloor, & a \text{ in a sign pair without threshold}.
    \end{cases}
  \end{equation}
\end{proposition}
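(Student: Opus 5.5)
The ceilings come from counting alone; the automaton plays no part. A compliant half-stack $h$ of $m$ plies is balanced and satisfies the $10\%$ rule, so by Lemma~\ref{lem:parikh-half} we have $c_{+\theta}(h)=c_{-\theta}(h)$ for every sign pair and $c_{b}(h)\ge t$ for every principal direction $b\in\{0,45,-45,90\}$, with $t$ the least integer satisfying $2t\ge 0.1N$. The plan is to fix a letter $a$, bound every other entry of the Parikh vector from below by these constraints, and subtract the total from $\sum_{b}c_{b}(h)=m$.

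For $a\in\{0,90\}$, the other three principal directions each contribute at least $t$, and the remaining letters contribute at least $0$. This gives $c_{a}(h)\le m-3t$. Balance does not improve this bound, because the orientations that $a$ does not use can all be placed in the $\pm45$ pair with $c_{45}=c_{-45}$, which is compatible with the thresholds.

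For $a\in\{45,-45\}$, balance gives $c_{45}=c_{-45}=c_{a}$, so the pair together contributes $2c_{a}$. The two directions $0$ and $90$ each contribute at least $t$. Hence $2c_{a}+2t\le m$. Since $c_{a}$ is an integer, $c_{a}\le\lfloor (m-2t)/2\rfloor$.

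For $a=\pm\theta$ in a pair without threshold, which only occurs on $\Sigeight$ and $\Sigtwelve$ with $\theta\notin\{45\}$, balance again gives $c_{+\theta}+c_{-\theta}=2c_{a}$. The four principal directions contribute at least $4t$, so $2c_{a}\le m-4t$, and integrality gives the floor.

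There is no real obstacle. The one point to check is that the case split covers every letter, and that the pair case uses the fact that $\pm45$ are principal directions: they are excluded from the third case and appear among the four thresholded directions there. It is also worth noting that the bounds are stated as ceilings, not as attained maxima, so the argument does not need to exhibit words meeting them; that would require checking (M1) and (M2), and it is not claimed. The argument does not use the run-one condition of Lemma~\ref{lem:mirror}, so it applies to any compliant word, and the same inequalities hold for a compliant non-symmetric word of length $L=N$ with $m$ replaced by $N$ and $t$ by the corresponding full-length threshold.
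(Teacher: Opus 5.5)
Your proof is correct and follows the paper's argument: for each case you fix $a$, bound from below the plies that cannot be spent on it using the thresholds $c_{b}(h)\ge t$ and the pairing $c_{+\theta}(h)=c_{-\theta}(h)$ from Lemma~\ref{lem:parikh-half}, subtract from $m$, and use integrality for the floors. Your side remarks are also correct but are not needed for the bound: that balance cannot sharpen the $\{0,90\}$ case at the level of compositions, that the run-one condition is never used, and that the argument carries over to non-symmetric words with $N$ and the full-length threshold.
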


\begin{proof}
  By Lemma~\ref{lem:parikh-half} the $10\%$ rule reads $c_{b}(h)\ge t$ for each of the four
  directions $b\in\{0,45,-45,90\}$, and balance reads $c_{+\theta}(h)=c_{-\theta}(h)$. Fix $a$ and
  bound the plies that cannot be spent on it. If $a\in\{0,90\}$, which have no distinct partner,
  the other three thresholded directions consume at least $3t$, whence $c_{a}\le m-3t$. If
  $a\in\{45,-45\}$, balance forces $c_{-a}=c_{a}$, so the pair consumes $2c_{a}$, while $0$ and
  $90$ consume at least $2t$; hence $2c_{a}\le m-2t$. If $a$ belongs to a sign pair that
  carries no threshold ($\pm 30$ and $\pm 60$, and on $\Sigtwelve$ also $\pm 15$ and $\pm 75$), the pair again consumes $2c_{a}$ while all four thresholded directions
  consume at least $4t$, hence $2c_{a}\le m-4t$.
\end{proof}

At $N=48$, where $m=24$ and $t=3$, \eqref{eq:ceiling} gives $c_{0}\le 15$. That bound is attained: the dynamic program
finds $840$ compliant half-stacks with exactly $15$, on $\Sigfour$ and on $\Sigeight$.

It is not tight in general. At $N=20$ on $\Sigfour$ the bound gives $c_{0}\le 7$, while
exhaustive enumeration shows the true maximum is $5$. The bound ignores adjacency: a half-stack rich in $0^{\circ}$ must break its runs every three
plies, the only orientations adjacent to $0^{\circ}$ are $\pm 45$, and the $90^{\circ}$ ply that
the $10\%$ rule demands cannot sit next to a $0^{\circ}$ one. The exact maximum needs the joint
state of automaton and counters, which the dynamic program carries.

Against exhaustive enumeration for every orientation at $N=12,16,20$
on $\Sigfour$ and $N=12,14$ on $\Sigeight$, \eqref{eq:ceiling} is attained everywhere
except at $N=20$ on $\Sigfour$, where the maxima fall short of it also for
$90$ ($5$ against $7$) and for $\pm 45$ ($3$ against $4$).

On $\Sigeight$ at $N=48$ the compliant language has mode $c_{0}=3$ and mean $4.42$, and
$0.00058\%$ of it lies at $c_{0}\ge 12$, half the half-stack.

\subsection{Verification of the counts}
\label{sec:verification}

Internally, we compared the counts with exhaustive enumeration filtered by rule checkers written
separately from the dynamic program, on totals and on the whole distribution by composition. The nine cases are
$N=8,12,\dots,24$ on $\Sigfour$ and $N=8,10,\dots,14$ on $\Sigeight$, and all coincide. The ten
manufacturing counts of Table~\ref{tab:counts}(a) on $\Sigfour$ and $\Sigeight$ agree with the
matrix powers of \eqref{eq:transfer}, computed in exact integer arithmetic. We checked the general program of Algorithm~\ref{alg:table},
with capped counters, separately: against exhaustive enumeration on $80$ cases that
cover the three alphabets, symmetric and non-symmetric words, balance on and off and the $10\%$ rule
on and off; against every $\Sigfour$ and $\Sigeight$ entry of Table~\ref{tab:counts}; by $2000$ draws of
Algorithm~\ref{alg:sample} in each of three settings, all compliant; and by \textsc{Completable} on
every one of the $\sum_{j\le 8}4^{j}$ prefixes at $N=16$ on $\Sigfour$, with no disagreement.

Externally, Table~II-1 of \cite{irisarri2009} tabulates the number of realisable stacking
sequences on $\{0,\pm 45,90\}$ rule by rule at $4$, $8$ and $16$ plies. We recover five of its six rule columns at both $4$ and $8$ plies, and with them the
definitions he used. Two differences forbid a column-by-column comparison with our own counts.
His rules are symmetry (Sym.), balance of $\pm 45$ (Eq.), contiguity (Group.), disorientation
(Des., our (M1)), no $0^{\circ}$ ply on either outer surface (TolDom.) and a $10\%$ rule. His
contiguity rule allows two identical consecutive plies where ours allows three; and his
$10\%$ rule is not a counting rule but a membrane-stiffness criterion, chosen to remain applicable
to arbitrary orientations and, as he notes, less restrictive for thin laminates.

\subsection{Completion of a published table}
\label{sec:completing}

At sixteen plies Table~II-1 leaves five of its six single-rule entries \emph{uncomputed}
\cite[p.~59]{irisarri2009}: only the symmetry entry, which it deduces from the eight-ply
enumeration, is filled. Four cumulative entries are also uncomputed; each combines rules with the
$10\%$ criterion. Each of the five combinatorial rules is counted
by a transfer matrix or in closed form, so
we fill four of the five blank cells in the sixteen-ply column of Table~\ref{tab:irisarri}; the fifth, the $10\%$ row, rests on a reading of a membrane-stiffness criterion that the thesis does not spell out and we cannot reconstruct.

\begin{table*}[!t]
  \centering\small
  \caption{Table~II-1 of \cite{irisarri2009} completed. The columns at four and eight plies are the
    published values, which our methods reproduce exactly except in the $10\%$ row; at sixteen plies the symmetry row is
    published and the other entries are the cells the table leaves uncomputed.}
  \label{tab:irisarri}
  \begin{tabular}{l r r r l}
    \toprule
    rule & $4$ plies & $8$ plies & $16$ plies & closed form \\
    \midrule
    Sym.    & $16$  & $256$    & $65\,536$          & $4^{n/2}$, $n$ even \\
    Eq.     & $70$  & $12\,870$ & $601\,080\,390$   & $\binom{2n}{n}$ \\
    Group.  & $228$ & $47\,088$ & $2\,010\,036\,708$ & $a_{n}=3(a_{n-1}+a_{n-2})$ \\
    Des.    & $108$ & $8\,748$  & $57\,395\,628$    & $4\cdot 3^{\,n-1}$ \\
    TolDom. & $144$ & $36\,864$ & $2\,415\,919\,104$ & $9\cdot 4^{\,n-2}$, $n\ge 2$ \\
    $10\%$  & $192$ & $62\,720$ & \multicolumn{2}{l}{not reproducible: see text} \\
    \bottomrule
  \end{tabular}
\end{table*}

Every entry at sixteen plies comes from a computation that first returns the published values at
four and eight plies; the four other than symmetry, which is a formula, also agree with an
independent exhaustive enumeration at eight plies.

Four of the five combinatorial rows are closed forms, and the
fifth, contiguity, satisfies a two-term linear recurrence,
\begin{equation}
  \label{eq:grouprec}
  a_{n}=3\bigl(a_{n-1}+a_{n-2}\bigr),\qquad a_{1}=4,\ a_{2}=16,
\end{equation}
which returns the published values at four and eight plies \cite{oeis_a123620}.
Balance, for its part, is the central binomial coefficient:
\begin{equation}
  \label{eq:eqbinom}
  \sum_{k=0}^{\lfloor n/2\rfloor}\binom{n}{k}\binom{n-k}{k}2^{\,n-2k}=\binom{2n}{n}.
\end{equation}
It is the coefficient identity
$\sum_{k}\binom{n}{k}\binom{n-k}{k}2^{\,n-2k}=[x^{n}](1+2x+x^{2})^{n}=[x^{n}](1+x)^{2n}$, and the
values are those of \cite{oeis_a000984}.

The criterion behind that row requires the extensional stiffnesses $A_{11}$, $A_{22}$ and $A_{66}$ of
classical lamination theory \cite{jones1999mechanics} to be at least those of a virtual
quasi-isotropic laminate $[45/0/-45/90]_{s}$ of $40\%$ the thickness. With plies of one thickness it
depends on the sequence only through its counts and enters the program as three linear inequalities on
uncapped counters; as stated, it gives $224$ and $64\,128$ admissible sequences at $4$ and $8$ plies,
against the published $192$ and $62\,720$, for any carbon-epoxy ply.

\section{Branch and bound}
\label{sec:tree}

A branch and bound for the stacking sequence \cite{kim2005gust} assigns plies one at a time from
the outer surface towards the mid-plane, since the outer plies weigh most in bending. Each node of
its tree is a prefix, discarded when a bound on the objective over its completions
cannot beat the incumbent \cite{morrison2016bnb}. The manufacturing rules are local,
so a node that breaks one is discarded on the spot. The guidelines are not, and a prefix that obeys
the rules may have no admissible completion; the tree learns it only after generating and
rejecting every leaf below, unless the bound happens to discard the branch first for another
reason.

The prefixes of length $j$ that obey the rules
number $T_{\mathrm{loc}}(j)=\boldsymbol{u}^{\top}A^{\,j-1}\boldsymbol{1}$, by the transfer matrix
of Section~\ref{sec:counting}. Those with at least one admissible completion number
$T_{\mathrm{ex}}(j)=\sum G_{j}(q,\kappa)$ over the cells of level $j$ with $F_{j}(q,\kappa)>0$,
where $G_{j}$ counts the prefixes that reach a cell and is obtained by carrying counts through the
forward pass of Algorithm~\ref{alg:table}. Their difference is the part of a locally checked tree
that leads nowhere. Table~\ref{tab:tree} gives it for the three alphabets; the values are exact and
agree with an explicit tree at $N=16$ on $\Sigfour$, $N=12$ on $\Sigeight$ and on a non-symmetric case.

\begin{table*}[!t]
  \centering\small\setlength{\tabcolsep}{4pt}
  \caption{A layer-wise tree over symmetric half-stacks against the cells of
    Algorithm~\ref{alg:table}, rules and guidelines imposed; sizes summed over the depths
    $j=1,\dots,N/2$. ``Local'' keeps every prefix that obeys the manufacturing rules, ``exact''
    those with an admissible completion; ``dead'' is the share of the first not in the second;
    ``cells'' counts the cells with $F_{j}>0$.}
  \label{tab:tree}
  \begin{tabular}{l r r r r r r r}
    \toprule
    & $N$ & admissible & tree, local & tree, exact & dead & cells & tree / cells \\
    \midrule
    $\Sigfour$   & 24 & $29\,560$ & $836\,388$ & $85\,556$ & $89.8\%$ & $1\,650$ & $52$ \\
                 & 48 & $1.12\times 10^{10}$ & $3.21\times 10^{11}$ & $2.96\times 10^{10}$ & $90.8\%$ & $21\,370$ & $1.4\times 10^{6}$ \\
    $\Sigeight$  & 24 & $182\,020$ & $1.66\times 10^{8}$ & $614\,124$ & $99.63\%$ & $8\,726$ & $70$ \\
                 & 48 & $9.00\times 10^{12}$ & $1.22\times 10^{16}$ & $2.90\times 10^{13}$ & $99.76\%$ & $1\,257\,774$ & $2.3\times 10^{7}$ \\
    $\Sigtwelve$ & 24 & $782\,348$ & $2.71\times 10^{10}$ & $2.70\times 10^{6}$ & $99.990\%$ & $21\,574$ & $125$ \\
                 & 48 & $2.18\times 10^{15}$ & $3.63\times 10^{20}$ & $7.30\times 10^{15}$ & $99.998\%$ & $1.40\times 10^{7}$ & $5.2\times 10^{8}$ \\
    \bottomrule
  \end{tabular}
\end{table*}

At $N=48$ on $\Sigeight$ the dead share
is zero at depth $6$, $2\%$ at depth $9$, $14\%$ at $12$, $75\%$ at $18$ and $97\%$ at $21$ of
$24$: the dead branches sit near the mid-plane, where the tree is widest.

Three things follow.

First, counting with a tree means reaching every admissible leaf. By Lemma~\ref{lem:cells} a tree
that never enters a dead branch is the unfolding of the cells: two prefixes in the same cell root
identical subtrees, so the program keeps one copy of each. The tree grows exponentially with the
depth, the cells polynomially (Theorem~\ref{thm:poly}).

Second, a branch and bound that calls \textsc{Completable}, a lookup, at each node never enters a
dead branch, and its branching, its bound and its incumbent are untouched. On the eight- and
twelve-orientation sets the test removes more than $99\%$ of what a locally checked tree would
visit before its bound acts.

Third, with the counters uncapped, the cells reachable from a
node are the compositions its admissible completions can still attain, so a bound on a quantity that
depends on the plies only through their counts, such as the membrane stiffness, can be taken over those compositions alone.

\section{Exact uniform sampling}
\label{sec:sampling}

Algorithm~\ref{alg:sample} draws every admissible word with probability $1/|\lang|$, in $L-1$
steps with no rejection (Theorem~\ref{thm:poly}).  The sampler never fails: it either returns a compliant laminate or reports that the
language is empty, which it knows from the count.

\section{A crossover valid by construction}
\label{sec:search}

Proposition~\ref{prop:cut} gives two conditions under which a one-point cut of two compliant
half-stacks yields a compliant child. The Parikh vector of the child is a difference of prefix and
suffix counters, so both conditions, equality of states and the Parikh test, are integer tests of cost $O(1)$ per candidate cut, and a child
costs $O(m)$ once the prefix states and counters of the two parents are in hand.

\begin{proposition}
  \label{prop:cut}
  Let $h_{A}$ and $h_{B}$ be compliant half-stacks of $m$ plies and let $1\le i<m$. If the automaton
  state after $h_{A}[1..i]$ equals the state after $h_{B}[1..i]$, and if the Parikh vector of
  $h_{A}[1..i]h_{B}[i+1..m]$ satisfies the conditions of Lemma~\ref{lem:parikh-half}, then that
  word is a compliant half-stack.
\end{proposition}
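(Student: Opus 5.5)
The plan is to split the compliance of the child $h=h_{A}[1..i]\,h_{B}[i+1..m]$ into its two parts, the manufacturing rules with the mid-plane condition and the Parikh conditions, and to discharge each separately. For the Parikh part nothing needs to be done: by hypothesis $\parikh{h}$ satisfies the conditions of Lemma~\ref{lem:parikh-half}, so by that lemma $h\,h^{R}$ satisfies balance \eqref{eq:balance} and the $10\%$ rule \eqref{eq:tenpct}. It remains to show that $h$ satisfies (M1) and (M2) and ends with a run of length one. Lemma~\ref{lem:mirror} then gives that $h\,h^{R}$ satisfies (M1)--(M2), so $h$ is compliant in the sense of Definition~\ref{def:admissible}.

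For the manufacturing part I will use the automaton of Theorem~\ref{thm:regular}, and specifically the fact that it is deterministic and that its state is a function of the prefix read. Let $q$ be the common state reached after $h_{A}[1..i]$ and after $h_{B}[1..i]$. Since $h_{B}$ is admissible, it is accepted, so the walk of $h_{B}[i+1..m]$ from $q$ is defined at every step. It ends in the state $q_{B}$ reached by $h_{B}$ itself, which has run one by Lemma~\ref{lem:mirror} applied to the compliant $h_{B}$. The walk of $h_{A}[1..i]$ from $q_{0}$ is likewise defined and ends in $q$, because $h_{A}$ is admissible. Concatenating the two walks, the child is read from $q_{0}$ without an undefined transition and reaches $q_{B}$. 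By the proof of Theorem~\ref{thm:regular} the automaton accepts exactly $\lang(\anti)$, so $h$ avoids $\anti$ and hence satisfies (M1)--(M2) by Proposition~\ref{prop:anti}. It also ends in a run-one state, which is the mid-plane condition.

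The only step that needs care is justifying that the suffix of $h_{B}$ can be replayed from the state reached by the prefix of $h_{A}$. The reason is that the transition function \eqref{eq:delta} depends on the history only through the current state $(a,r)$. So no factor of $\anti$ straddles the cut: a forbidden pair across the cut would need $\delta(a,b)>45$ with $a$ the last ply of the prefix, and that ply is the same in both parents; a forbidden quadruple across it would need the run lengths to add beyond three, and the run length $r$ is also the same in both parents. I expect this to be immediate once it is stated as the determinism of $\delta_{Q}$. So the proof should be a short two-part argument: the automaton handles the rules and the mirror, and the hypothesis together with Lemma~\ref{lem:parikh-half} handles the guidelines.
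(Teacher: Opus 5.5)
Your proposal is correct and follows essentially the same route as the paper's proof. Both use the determinism of $\delta_{Q}$ to replay the suffix of $h_{B}$ from the common state, note that the child ends in the run-one final state of $h_{B}$ so that Lemma~\ref{lem:mirror} applies, and obtain the composition conditions from the hypothesis through Lemma~\ref{lem:parikh-half}; your explicit check that no factor of $\anti$ straddles the cut is a harmless unpacking of that determinism.
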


\begin{proof}
  The state $(a,r)$ of Theorem~\ref{thm:regular} determines which continuations are admissible,
  since \eqref{eq:delta} depends on the prefix only through it. The prefix $h_{A}[1..i]$ is
  admissible because $h_{A}$ is, and it leaves the automaton in the state from which $h_{B}$ read
  its own suffix; that suffix is therefore readable, and the child satisfies (M1) and (M2). Its
  final state is the final state of $h_{B}$, which has run one because $h_{B}$ is compliant, so
  Lemma~\ref{lem:mirror} applies. The composition conditions hold by hypothesis.
\end{proof}

\begin{algorithm}[!htbp]
  \caption{\textsc{Cut}$(h_{A},h_{B})$: recombination inside the compliant language.}
  \label{alg:cut}
  \begin{algorithmic}[1]
    \State $q^{A}_{i},q^{B}_{i}\gets$ automaton states after $h_{A}[1..i]$, $h_{B}[1..i]$, for $1\le i<m$
    \State $\boldsymbol{c}^{A}_{i},\boldsymbol{c}^{B}_{i}\gets$ Parikh vectors of the same prefixes
    \State $I\gets\{\,i : q^{A}_{i}=q^{B}_{i}$ and $\boldsymbol{c}^{A}_{i}+\parikh{h_{B}}-\boldsymbol{c}^{B}_{i}$ satisfies Lemma~\ref{lem:parikh-half}$\,\}$
    \If{$I\neq\emptyset$} \State choose $i\in I$; \Return $h_{A}[1..i]\,h_{B}[i+1..m]$
      \Comment{compliant by Prop.~\ref{prop:cut}}
    \Else \State \Return \textsc{Sample}$(F)$ \Comment{no admissible cut: fresh exact draw}
    \EndIf
  \end{algorithmic}
\end{algorithm}

At $N=48$ an admissible cut exists for $49.0\%$ of $3000$ independent pairs on $\Sigfour$ and
$18.0\%$ on $\Sigeight$. We verified compliance on the $2008$ children with rule checkers written
separately from the dynamic program, and the sufficiency of the condition by brute force over
$3600$ pairs and all their cuts, at $N=16$ and $20$ on $\Sigfour$ and $N=12$ and $14$ on $\Sigeight$. 

\begin{proposition}
  \label{prop:absorbing}
  Let $h$ be a compliant half-stack of $m$ plies. Every cut of $h$ with itself is admissible under
  Proposition~\ref{prop:cut}, and every child it produces equals $h$.
\end{proposition}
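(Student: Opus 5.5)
The plan is to check the two hypotheses of Proposition~\ref{prop:cut} directly for the pair $h_{A}=h_{B}=h$, and then observe that the child is the concatenation $h[1..i]\,h[i+1..m]$, which is $h$ itself. No new construction is needed; the argument is a substitution into Proposition~\ref{prop:cut} and into the selection rule of Algorithm~\ref{alg:cut}.

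\emph{State equality.} Fix a cut $1\le i<m$. The automaton of Theorem~\ref{thm:regular} is deterministic, and the state it reaches is a function of the prefix read. The prefix $h[1..i]$ is a prefix of $h$, which satisfies (M1) and (M2), so the walk from $q_{0}$ is defined. Both parents feed the same prefix, so $q^{A}_{i}=q^{B}_{i}$ for every $i$.

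\emph{Parikh test.} The child's Parikh vector, as computed in Algorithm~\ref{alg:cut}, is $\boldsymbol{c}^{A}_{i}+\parikh{h}-\boldsymbol{c}^{B}_{i}$. With $\boldsymbol{c}^{A}_{i}=\boldsymbol{c}^{B}_{i}$ this equals $\parikh{h}$. Since $h$ is compliant, $\parikh{h}$ satisfies the conditions of Lemma~\ref{lem:parikh-half}, so every $i$ lies in the set $I$ of Algorithm~\ref{alg:cut}. Hence every cut is admissible under Proposition~\ref{prop:cut}.

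\emph{Child equals $h$.} For every $i$ the returned word is $h[1..i]\,h[i+1..m]=h$. Because $I$ is nonempty, the fallback draw of Algorithm~\ref{alg:cut} is never invoked, so every child the operator produces is $h$. I expect no real obstacle here. The only point needing care is to say that ``every child it produces'' covers the branch of Algorithm~\ref{alg:cut} as well, and that branch is excluded because $I=\{1,\dots,m-1\}\neq\emptyset$ when $m\ge2$. For $m=1$ there are no cuts and the statement is vacuous.
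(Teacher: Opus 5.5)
Your proposal is correct and follows the paper's proof: with $h_{A}=h_{B}=h$ the state condition of Proposition~\ref{prop:cut} holds at every $1\le i<m$, and the child $h[1..i]\,h[i+1..m]$ is $h$ itself, so its Parikh vector satisfies Lemma~\ref{lem:parikh-half} because $h$ is compliant. Your extra point, that the fallback of Algorithm~\ref{alg:cut} cannot fire because $I=\{1,\dots,m-1\}$ is nonempty, is the observation the paper makes just after the proposition.
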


\begin{proof}
  Take $h_{A}=h_{B}=h$ and any $1\le i<m$. The automaton state after $h[1..i]$ equals itself, so
  the state condition of Proposition~\ref{prop:cut} holds at every $i$; and the child
  $h[1..i]\,h[i+1..m]$ is $h$, whose Parikh vector satisfies the conditions of
  Lemma~\ref{lem:parikh-half} because $h$ is compliant.
\end{proof}

The fallback to the exact sampler, which is the only source of new material, fires exactly when no
cut exists, and therefore never fires once the population has converged.

The proofs of Propositions~\ref{prop:cut} and~\ref{prop:absorbing} use only that the automaton
state determines the admissible continuations and that acceptance depends on the final state and
the counters; both therefore hold for any operator that joins a prefix and a suffix meeting in one
state and closing the counters together. A diversity guard has to come from outside the operator.

\section{Discussion}
\label{sec:conclusions}

No result of this paper evaluates the response of a laminate: the counts, the sampler, the
completion test and the recombination concern the admissible set. Closed-form responses exist for
simple geometries under elementary loads, such as the buckling load and the natural frequencies of a
simply supported rectangular plate of a symmetric and balanced laminate, exact when the
bending-twisting terms $D_{16}$ and $D_{26}$ vanish. They do not exist for the geometries and load
cases of an industrial component, nor, beyond a few antisymmetric lay-ups, for laminates whose
couplings cannot be neglected, $\mathbf{B}\neq 0$ in a non-symmetric laminate and
$A_{16}, A_{26}\neq 0$ in an unbalanced one; there each evaluation is a finite-element solve, and a
search that spends evaluations on designs that cannot be built pays for them in solves.

Stacking-sequence optimisation is surveyed in \cite{ghiasi2009}. Four things this paper uses are
not ours. Treating the manufacturing
constraints exactly is \cite{zein2014}. Expressing the manufacturing rules as forbidden factors and pruning a prefix tree by them is
recited in a granted patent \cite{us10380319}. An evolutionary operator
that preserves a design guideline is not new either: \cite{irisarri2009} has crossover and mutation
preserving symmetry and balance, and what is new is preserving the whole modelled set in a single cut, each cut decided in $O(1)$. Nor is the sampling method ours. The backward walk proportional to the counts is the recursive
method \cite{flajolet1994}; an automaton accepting under a semilinear condition on counters is a
Parikh automaton \cite{klaedtke2003}; and conditioning the recursive method on an exact composition
is older still \cite{denise2000,bertoni2003}, with a Boltzmann counterpart \cite{bodini2010}. What
is new is therefore narrow: exact conditioning on thresholds and signed differences of the letter
counts, rather than on a prescribed composition, realised as one explicit product automaton of size
polynomial in $N$ that serves every composition at once and decides completability.

Four limitations bound the scope. First, the guidelines modelled are symmetry, balance and
the ten per cent rule: damage tolerance, ply drops and blending between adjacent panels
are not represented, and blending is not a function of one laminate's Parikh vector, so it would
not fit this construction without extension. The patent family cited above handles blending by
choosing each panel's sublaminate from the pre-generated library so as to couple adjacent panels
\cite{us10467350}. Second, the dynamic program tracks the sign pairs without a threshold,
$\pm 30$ and $\pm 60$ and on $\Sigtwelve$ also $\pm 15$ and $\pm 75$, through their difference alone, which suffices for balance but not for constraints that weigh those orientations
individually, such as the membrane-stiffness criterion of \cite{irisarri2009} on $\Sigeight$. Third, this paper measures no search: what the operations save a heuristic in evaluations and in
generation time is not quantified here. Fourth, the program is polynomial for a fixed alphabet but its degree grows with the number of sign pairs: the
non-symmetric balanced count on $\Sigtwelve$, which carries five differences, needs
$1.1\times 10^{9}$ cells at the middle level, $69$~GB and an hour and a half at $N=54$.

\section{Conclusions}
\label{sec:final}

The manufacturing rules and guidelines treated here form a regular language conditioned on
thresholds and signed differences of its letter counts, so one table per ply count, polynomial in
$N$, counts the admissible sequences exactly, draws them uniformly and tells whether a partial
sequence can still be completed. In a branch and bound it supplies the completion test without which
$99.8\%$ of a locally checked tree is dead at $48$ plies on eight orientations, and in a genetic
algorithm a recombination that returns admissible children by construction. Blending across panels,
ply drops and damage tolerance are the constraints this construction does not yet carry.

\section*{Funding}
This research did not receive any specific grant from funding agencies in the public, commercial,
or not-for-profit sectors.

\section*{Declaration of competing interest}
The authors declare that they have no known competing financial interests or personal relationships
that could have appeared to influence the work reported in this paper.

\section*{Data availability}
The program of Algorithms~\ref{alg:dfa}--\ref{alg:cut} and the meet-in-the-middle counter are
available on Zenodo~\cite{langiu2026stackcode}. The rule checkers of
Section~\ref{sec:verification}, written separately from the dynamic program, are those of the
open-source design workflow of \cite{langiu2026code}, also on Zenodo. The scripts were written and
run with the assistance of Claude (Anthropic), and every number reported was recomputed by a
separate verification script.

\section*{Declaration of generative AI and AI-assisted technologies in the manuscript preparation process}
During the preparation of this work the authors used Anthropic's Claude, through the Claude Code
interface, in order to draft and edit manuscript text, to write and run the counting, sampling and
search scripts, and to check the cited sources against the statements that cite them. After using
this tool, the authors reviewed and edited the content as needed and take full responsibility for
the content of the published article.

\bibliographystyle{plainurl}
\bibliography{paper2}

\end{document}